\shorttitle{Queue-Size Scaling} 
\newcommand{\veps}{\varepsilon}
\newcommand{\EE}{\mathbb{E}}
\newcommand{\PP}{\mathbb P}
\newcommand{\bOne}{\boldsymbol{1}}
\newcommand{\NN}{\mathbb{N}}
\newcommand{\bsig}{\bm{\sigma}}
\newcommand{\cS}{\mathcal{S}}
\newcommand{\bQ}{\bm{Q}}
\newcommand{\ZZ}{\mathbb{Z}}
\newcommand{\bq}{\bm{q}}
\newcommand{\bg}{\bm{g}}
\newcommand{\cR}{\mathcal{R}}
\newcommand{\cC}{\mathcal{C}}
\newcommand{\udl}{\underline}
\newcommand{\bA}{\bm{A}}
\newcommand{\cE}{\mathcal{E}}
\newcommand{\cW}{\mathcal{W}}
\newcommand{\ER}{Erd\H{o}s-R\'{e}nyi\xspace}
\begin{document}

\title{Improved Queue-Size Scaling for Input-Queued Switches via Graph Factorization}

\authorone[Fuqua School of Business, Duke University]{Jiaming Xu} 
\authortwo[Booth School of Business, University of Chicago]{Yuan Zhong}

\addressone{100 Fuqua Drive, Durham NC 27708, USA. Email: jx77@duke.edu}

\addresstwo{5807 South Woodlawn Ave, Chicago, IL 60637, USA. Email: yuan.zhong@chicagobooth.edu}
\begin{abstract}
\footnote{This paper makes extensive use of the asymptotic notation. 
For completeness and easy reference, we provide their definitions here. 
Consider a positive parameter $r > 0$, and let $g(r)$ and $f(r)$ be two functions 
of $r$ that take positive real values. Then, $f(r) = O\left(g(r)\right)$ if 
$\limsup_{r\to\infty} f(r)/g(r) < \infty$, $f(r)=\Omega\left(g(r)\right)$ if $\liminf_{r\to\infty} f(r)/g(r) > 0$, 
$f(r) = \omega\left(g(r)\right)$ if $\liminf_{r\to\infty} f(r)/g(r) = \infty$, 
and $f(r) = \Theta\left(g(r)\right)$ if $f(r) = O(g(r))$ and $f(r) = \Omega(g(r))$. 
In our applications, the asymptotic relations may often appear to involve two parameters, 
such as the case of $O\left(n(1-\rho)^{-1}\right)$, where $n$ and $\rho$ are parameters. 
In these cases, simply treat the function of these two parameters as a single parameter; 
i.e., let $r = n(1-\rho)^{-1}$, and the asymptotic notation is well-defined.}
This paper studies the scaling of the expected total queue size in an $n\times n$
input-queued switch, as a function of both the load $\rho$ and the system scale $n$.
We provide a new class of scheduling policies under 
which the expected total queue size scales as 
$O\left( n(1-\rho)^{-4/3} \log \left(\max\{\frac{1}{1-\rho}, n\}\right)\right)$, over all $n$ and $\rho<1$, 
when the arrival rates are uniform. This improves
over the  previously  best-known scalings in two regimes: $O\left(n^{1.5}(1-\rho)^{-1} \log \frac{1}{1-\rho}\right)$
when $\Omega(n^{-1.5}) \le 1-\rho \le O(n^{-1})$ and $O\left(\frac{n\log n}{(1-\rho)^2}\right)$ 
when $1-\rho \geq \Omega(n^{-1})$.
 A key ingredient in our method
is a tight  characterization  of the largest $k$-factor of a random bipartite multigraph, which may be of independent interest. 
\end{abstract}



\section{Introduction}\label{sec:intro}
An $n\times n$ input-queued switch is a discrete-time queueing network that consists of $n^2$ queues, 
which are arranged in the form of an $n\times n$ matrix. 
Packets arrive to these $n^2$ queues exogenously according to independent Bernoulli processes. 
In each time slot, 
packets are processed according to {\em schedules}, subject to the following constraints: 
at most one queue can be served from each row and  each column, and when a queue is chosen 
to be served, at most one packet can be processed. After the packets are processed, they immediately depart the system. 
The main goal  is to design a scheduling policy 
that decides which queues to serve in each time slot, in order to optimize certain performance measures. 

The input-queued switch plays a pivotal role in the study of  data packets scheduling 
in an internet router \cite{MAW96} and more recently in a data center network \cite{pfabric, coflow, fastpass}. 
It also serves as a prominent example of ``stochastic processing networks'' (SPN) \cite{harrison:canonical, harrison:canonical.corr}, 
which is a canonical model of dynamic resource allocation 
problems that arise in a wide range of applications. 
The study of the switch has paved the way towards a better understanding of more general SPNs; 
indeed, insights generated from policy design and performance analysis 
in a switch often extend to broader classes of SPNs \cite{LD05, MAW96, SWZ11, Stolyar2004, TE92}. 

Even though basic performance measures
such as throughput and stability, are relatively well understood for switches (and for general SPNs) 
(see e.g., \cite{daibala, KW04, K-M, LD05, MAW96, TE92, DLK01}), 
there is a wide gap in our understanding of more refined performance measures, 
such as moments of queue sizes. 
In this paper, we reduce this gap by designing new scheduling policies that achieve tighter bounds 
on the expected total queue size in switches. 
Our study is motivated by \cite{STZopen}, which stated the conjecture that 
the expected total queue size scales as $\Theta \left(\frac{n}{1-\rho}\right)$  
in an $n\times n$ switch, where $\rho \in (0,1)$ is the load factor 
and equals the largest total arrival rate to each row/column of the switch.

It can be easily seen that $\Omega \left(\frac{n}{1-\rho}\right)$ is a lower bound 
on the expected total queue size, over all $n$ and $\rho$ (see e.g., \cite{STZopen}). 
However, matching upper bounds have been only established 
for restricted parameter regimes. 
\cite{NMC07} proposed a batching policy under which the expected total queue size 
is upper bounded by $O\left(\frac{n\log n}{(1-\rho)^2}\right)$, so that when $\rho \in (0, 1)$ is 
treated as a constant, this upper bound matches the lower bound up to a logarithmic factor in $n$. 
\cite{SWZ11} proposed a policy under which the switch emulates a product-form queueing network, 
and established an upper bound of $O\left(\frac{n}{1-\rho}+n^3\right)$,
which matches the lower bound up to a constant factor when $1-\rho = O\left(n^{-2}\right)$.
More recently, a significant work \cite{MS16} shows that the celebrated Max-Weight policy achieves an upper bound of 
$O\left(\frac{n}{1-\rho} + \frac{n^{5-1/r}}{(1-\rho)^{1/r}}\right)$ for any integer-valued $r \geq 2$, 
for systems with uniform arrival rates, which matches the lower bound up to a constant factor 
when $1-\rho = O\left(n^{-a}\right)$ for any $a > 4$. 
Let us also note that using a simple Lyapunov function argument, 
it can be shown that the Max-Weight policy achieves 
an upper bound of $O\left(\frac{n^2}{1-\rho}\right)$ (see e.g., \cite{STZopen, MAW96, TE92, shahkopi}).  

Observe that when $1-\rho = \Theta\left(n^{-1}\right)$, 
the conjectured scaling is $\Theta\left(n^2\right)$, 
whereas all of the aforementioned upper bounds are at least a multiplicative factor $\Omega(n)$ away. 
Motivated by this observation, \cite{STZ16} considered the regime 
where $1-\rho \leq O\left(n^{-1} \right)$, and proposed a scheduling policy 
that gives an upper bound of $O\left(n^{1.5}(1-\rho)^{-1} \log \frac{1}{1-\rho}\right)$, 
for systems with uniform arrival rates.
When $1-\rho = \Theta\left(n^{-1}\right)$, this upper bound becomes $O\left(n^{2.5}\log n\right)$, 
a multiplicative factor of $O\left(n^{0.5}\log n\right)$  away from  the conjectured scaling. 
While this was a significant improvement,  whether this gap 
can be further reduced remains elusive. 

The main contribution of this paper is a new class of policies that achieves an upper bound of 
$O\left(n (1-\rho)^{-4/3} \log \left(\max\{n, 1/(1-\rho)\}\right)\right)$ uniformly over all $n$ and $\rho < 1$, 
when the arrival rates are equal. 
When $\Omega(n^{-1.5}) \le 1-\rho \le O(n^{-1})$, this upper bound improves
over the  previously  best-known scaling $O\left(n^{1.5}(1-\rho)^{-1} \log \frac{1}{1-\rho}\right)$ \cite{STZ16}. 
In particular, when $1-\rho = \Theta\left(n^{-1} \right)$, our upper bound reduces to $O\left(n^{7/3} \log n\right)$, 
an improvement over the previously best-known scaling $O\left(n^{2.5}\log n\right)$. 
When $1-\rho \geq \Omega(n^{-1})$, our upper bound improves over the previously best-known scaling 
$O\left(\frac{n\log n}{(1-\rho)^2}\right)$ \cite{NMC07}. 
Finally, we would also like to point out that besides the upper bound $O\left(\frac{n\log n}{(1-\rho)^2}\right)$ in \cite{NMC07}, 
ours is the only other upper bound that scales almost linearly in $n$, the system scale, 
for any fixed value of $\rho \in (0, 1)$.
See Table \ref{tab:bounds} for a summary of the best-known scalings on the expected total queue size 
under various regimes. 

\begin{table}[h]
\begin{center}
\caption{Best known scalings of the expected total queue size in various regimes. 
Here, $\rho$ is the load factor, and $n$ is the system scale.}
\vspace{.5cm}
\begin{tabular}{|c|c|c|}
\hline
Regime & Scaling & References \\
\hline
$\frac{1}{1-\rho} < n$ & $O\left(\frac{n \log n}{(1-\rho)^{4/3}}\right)$ & this work 
 \\
\hline
$\frac{1}{1-\rho} = n$ & $O\left(n^{7/3}\log n\right)$ & this work \\
\hline
$n\leq \frac{1}{1-\rho} < n^{1.5}$ & $O\left(\frac{n\log n}{(1-\rho)^{4/3}}\right)$ & this work \\
\hline
$n^{1.5} \leq \frac{1}{1-\rho} < n^{2}$ & $O\left(\frac{n^{1.5}\log n}{1-\rho}\right)$ & \cite{STZ16} \\
\hline
$\frac{1}{1-\rho} \geq n^2$ & $\Theta\left(\frac{n}{1-\rho}\right)$ & \cite{SWZ11}\\
\hline
$\frac{1}{1-\rho} \geq n^{4+\veps}$ & $\Theta\left(\frac{n}{1-\rho}\right)$ & \cite{SWZ11, MS16}\\
\hline
\end{tabular}
\label{tab:bounds}
\end{center}
\end{table}

At the heart of our new policy is an efficient scheduling mechanism which depletes the 
packets in the $n^2$ queues as much as possible without any waste of service opportunities. 
Mathematically, this can be formulated as an integer linear optimization problem: 
call a matrix $\bq=(q_{ij})_{i,j=1}^n$ with non-negative integer entries a {\em queue matrix}; 
given $\bq$, find the maximum integer $k$ and a queue matrix $\bg=(g_{ij})_{i,j=1}^n$  
such that  $\sum_{i} g_{ij}=\sum_{j} g_{ij}=k$
and $0 \le g_{ij} \le q_{ij}$. 
Such $\bg$ can then be written as the sum of $k$ maximal schedules, 
which can be used to deplete the $n^2$ queue sizes of $\bq$.
 By representing the rows as left vertices of a bipartite (multi)graph, 
and the columns as right vertices, the queue size $q_{ij}$
can be viewed as the number of edges between left vertex $i$ and right vertex  $j$. 
In this vein, this optimization problem is equivalent to finding the largest $k$-factor (spanning $k$-regular  subgraph)
in a bipartite (multi)graph $G$. Under the Bernoulli arrival assumption, the queue sizes $q_{ij}$ are independently and binomially distributed
for some  common parameters $m$ and $p$. 
We show that if $pmn \ge 152 \log n$, then $G$  has a $k$-factor with 
$k=\lfloor pmn-  \sqrt{304pmn\log n} \rfloor $ with probability $1-n^{-16}$. 
This result is new, and essentially tight: for example, 
when $G$ is an \ER bipartite simple graph, i.e., when $m=1$, then the condition $np=\omega(\log n)$ implies that 
$G$ has a $k$-factor with  $k = np - \Theta(\sqrt{np \log n})$, which is asymptotically the best possible, because the minimum degree of 
$G$ is asymptotically equal to $np-\Theta(\sqrt{np \log n})$ with high probability.

We point out that there is a vast literature on graph factors and factorization (see the survey~\cite{plummer2007graph} and 
the  book~\cite{akiyama2011factors} and the references therein).
Most previous work focuses on deriving sufficient conditions for the existence of a $k$-factor for  worst-case possible graphs $G$. 
For instance, Csaba~\cite{csaba2007regular} 
showed that for any bipartite simple graph 
$G$ with $n$ left and $n$ right vertices, if the minimum degree
$\delta n \ge n/2$, then $G$ has a $k$-factor with $k=\lfloor n \frac{\delta+\sqrt{2\delta-1}}{2}\rfloor$.
Applying this result to an \ER bipartite simple graph, 
we can only conclude with high probability 
the existence of $k$-factor with $k \sim n \frac{p+\sqrt{2p-1}}{2}$, 
which is suboptimal, when compared to our result.

Random graph factorization was studied earlier in Shamir and Upfal~\cite{shamir1981factors}. It was shown that
for an \ER random graph $G$,
if the average degree $np$ satisfies $np-\log n - (k-1) \log \log n \to +\infty$ for a fixed $k \ge 1$, 
then $G$ has a $k$-factor with probability converging to $1$. However, this result only holds for a fixed $k$ and does not 
provide the largest possible value of $k$ in the dense graph regime where $np=\omega(\log n)$. 
A related random capacitated transportation problem was studied in~\cite{hassin1988probabilistic}, 
where we observe a
$n\times n$ capacity matrix $C=(c_{ij})$ with random $c_{ij}$ 
and aim to find a matrix $X=(x_{ij})$ with the row sums and column sums  as large as possible 
under the constraint that $0 \le x_{ij} \le c_{ij}$.
  Their general result
allows for $c_{ij}$ to be not independent and not identically distributed. 
In the special case where $c_{ij}$ is Bernoulli with the success probability $p$, their result implies that with high probability an 
 \ER random bipartite graph $G$ has $k$-factor with $k \sim n^{1/3}$, which is much smaller than $np$ for large $n$ 
 and hence highly suboptimal.

Before proceeding, let us make two remarks regarding the proposed policy and its performance scaling. 
First, our policy and current analysis rely crucially on the assumption of uniform arrival rates, while some existing
policies and results such as the standard batching policy \cite{NMC07} and the Max-Weight policy \cite{MAW96}
apply to heterogeneous arrival rates. 
Second, our policy is required to know the arrival rates {\em a priori}. 
In contrast, some existing policies, such as Max-Weight policy, are based solely on the observed queue sizes. 

\subsection{Organization}\label{ssec:organize}
The rest of the paper is organized as follows.
The model is formally introduced in Section \ref{sec:model}. 
Then, we present our main result and an overview of the policy in Section \ref{sec:main}. 
Section \ref{sec:prelim} summarizes a few important preliminary results to be used for later parts of the paper. 
Section \ref{sec:envelope} introduces the concept of a lower envelope of a queue matrix and
provides a characterization of the tightest lower envelope of 
a random queue matrix, or equivalently, the largest $k$-factor of a random bipartite multigraph. 
In Section \ref{sec:policy}, we formally describe our policy, and in Section \ref{sec:analysis}, 
we prove the main result. We conclude the paper with some discussion in Section \ref{sec:discussion}.

\subsection{Notation}\label{ssec:notation} 
We reserve boldface letters for vectors and matrices, 
plain and lowercase letters for deterministic scalars,  
and upper-case letters for random quantities. 
We also reserve script letters, such as $\cS$, $\cE$, etc, for sets and/or events. 
For a set $\cS$, we use $\cS^c$ to denote the complement of $\cS$, 
and $|\cS|$ to denote the cardinality of the set $\cS$. 
The indicator function of set $\cS$ is denoted by $\bOne_{\cS}$.  
We use $\NN$ to denote the set $\{1, 2, \cdots\}$ of natural numbers, 
$\ZZ$ to denote the set of integers, and $\ZZ_+$ to denote the set $\{0, 1, 2, \cdots\}$ 
of non-negative integers. 
For $n \in \NN$, we use $[n]$ to denote the set $\{1, 2, \cdots, n\}$. 
For two real numbers $x$ and $y$, use $x \vee y = \max\{x, y\}$ 
to denote the larger of $x$ and $y$. 
For a real number $x$, we use $(x)^+ = x\vee 0$ 
to denote the non-negative part of $x$. 
The shorthand {\em i.i.d} means ``independently and identically distributed'', 
WLOG means ``without loss of generality'', LHS means ``left-hand side'' 
and RHS means ``right-hand side.'' 
We say a sequence of events $\cE_n$ indexed by a positive integer $n$ 
holds with high probability, if the probability of $\cE_n$ converges to $1$ as $n \to +\infty$.
Finally, we note that  notation in this paper
will be made as consistent with that in \cite{STZ16} as possible.

%
%

\section{Input-Queued Switch Model}\label{sec:model} 
Here we only provide a mathematical description of the model; 
see e.g., \cite{MAW96} for more details on the physical architecture. 
Abstractly, an $n\times n$ input-queued switch consists of $n^2$ queues, 
indexed by the pair $(i, j)$, where $i, j \in [n]$. 
In this paper, the first coordinate $i$ of the pair $(i,j)$ is called an {\em input}, 
the second coordinate $j$ an {\em output}, 
and $n$ is often referred to as the {\em system scale} of the switch. 
The system operates in discrete time, indexed by $\tau \in \NN$. 
For each $i$ and $j$, {\em packets} arrive in queue $(i, j)$ according to some exogenous process. 
Let $A_{ij}(\tau)$ denote the cumulative number of arrivals to queue $(i, j)$ from time slot $1$ to $\tau$. 
Similar to \cite{STZ16}, we assume that arrivals to each of the $n^2$ queues form an independent Bernoulli process 
with rate $\rho/n$, so that $A_{ij}(\tau)-A_{ij}(\tau-1)$ are independent Bernoulli random variables 
with mean $\rho/n$, for $\tau \in \NN$ and $i, j \in [n]$, with the convention that $A_{ij}(0) = 0$ 
for all $i$ and $j$. 
We are only interested in systems that can be made stable under some scheduling policy, 
so we assume that $\rho < 1$, i.e., the system is underloaded. 

In each time slot, the switch can serve a number of packets using a {\em schedule}, 
subject to the following constraints: for each $i$, at most one packet can be served 
from any of the $n$ queues $(i, 1), (i, 2), \cdots$, and $(i, n)$; 
and for each $j$, at most one packet can be served from any of 
the $n$ queues $(1, j), (2, j), \cdots$, and $(n, j)$. 
More precisely, let $\bsig = (\sigma_{ij})_{i,j=1}^n$ be a schedule, 
where $\sigma_{ij}$ denotes the number of packets can be served from queue $(i, j)$ in a time slot. 
Then, the set $\cS$ of feasible schedules is given by
\begin{equation}\label{eq:schedule}
\cS = \left\{ \bsig \in \{0, 1\}^{n\times n} : \forall i, \sum_{j'=1}^n \sigma_{ij'} \leq 1; \ \text{ and } \ \forall j, \sum_{i'=1}^n \sigma_{i'j} \leq 1 \right\}.
\end{equation}
Since the set of schedules $\cS$ corresponds exactly to the set of (partial) matchings 
between the inputs and outputs, a schedule is sometimes called a {\em matching} as well. 
A schedule $\bsig = (\sigma_{ij}) \in \cS$ with $1 =  \sum_{j'=1}^n \sigma_{ij'} = \sum_{i'=1}^n \sigma_{i'j}$ 
for all $i$ and $j$ is also called a {\em full (perfect) matching}.
A scheduling {\em policy} decides, in each time slot, the schedule to use for serving the packets in the system, 
based on the past history and the current queue sizes. 

To specify the dynamics of the model, we adopt the following timing convention. 
Let $Q_{ij}(\tau)$ be the size of the queue $(i, j)$ at the beginning of time slot $\tau$. 
In each time slot $\tau$, the policy first observes the queue sizes $Q_{ij}(\tau)$, 
and then decides on the schedule $\bsig(\tau)$, which is applied in the middle of the time slot. 
At the end of the time slot, new arrivals take place. Thus, for each $i$, $j$, and $\tau$, we have
\begin{equation}\label{eq:dynamics}
Q_{ij}(\tau+1) = \left(Q_{ij}(\tau) - \sigma_{ij}(\tau)\right)^+ + A_{ij}(\tau) - A_{ij}(\tau-1).
\end{equation}
Since $\sigma_{ij}(\tau) \in \{0, 1\}$, Eq. \eqref{eq:dynamics} can be rewritten as 
\begin{equation}\label{eq:dynamics1}
Q_{ij}(\tau+1) = Q_{ij}(\tau) - \sigma_{ij}(\tau)\bOne_{\{Q_{ij}(\tau)>0\}} + A_{ij}(\tau) - A_{ij}(\tau-1).
\end{equation}
Throughout the paper, we assume that the system starts empty, i.e., $Q_{ij}(1) = 0$ for all $i$ and $j$. 
Then, similar to \cite{STZ16}, we sum Eq. \eqref{eq:dynamics1} over time to get that for $\tau \in \NN$, 
\begin{equation}\label{eq:dyn_cumm}
Q_{ij}(\tau+1) = A_{ij}(\tau) - S_{ij}(\tau)
\end{equation} 
for all $i$ and $j$, where 
\begin{equation}\label{eq:offered_service}
S_{ij}(\tau) = \sum_{t=1}^{\tau} \sigma_{ij}(\tau)\bOne_{\{Q_{ij}(\tau)>0\}}
\end{equation}
is the {\em actual} service received by queue $(i, j)$ from time slot $1$ to $\tau$. 
Note $S_{ij}(\tau)$ is different from $\sum_{t=1}^{\tau} \sigma_{ij}(\tau)$, 
the cumulative service {\em offered} to queue $(i, j)$ from slot $1$ to $\tau$. 
Whenever service is offered for queue $(i, j)$ (i.e., $\sigma_{ij}(\tau) = 1$), but there is no packet to be served 
(i.e., $Q_{ij}(\tau) = 0$), we called the offered service {\em wasted}.

\section{Main Result and Policy Overview}\label{sec:main} 
In this section, we present the main result of the paper, as well as an overview of our policy. 

\subsection{Main result}\label{ssec:main}
\begin{theorem}\label{theorem:main} 
Let $n \in \NN$, and consider an $n\times n$ input-queued switch for which the $n^2$ arrival streams 
form independent Bernoulli processes with a common arrival rate $\rho/n$, where $\rho \in (0, 1)$. 
Then, there exists a scheduling policy under which 
the expected total queue size is upper bounded by $cn (1-\rho)^{-4/3} \log f$ at all times, 
where 
\begin{equation}\label{eq:f}
f = n \vee \frac{1}{1-\rho}, 
\end{equation}
and $c$ is a constant that does not depend on $n$ or $\rho$; i.e., for any $\tau \in \NN$, 
\begin{equation}\label{eq:main}
\EE\left[\sum_{i,j=1}^n Q_{ij}(\tau)\right] \leq cn (1-\rho)^{-4/3} \log f.
\end{equation}
\end{theorem}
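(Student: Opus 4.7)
My plan is to construct a batching-based scheduling policy that exploits the tight $k$-factor characterization from the preceding section. The basic observation is that within any interval of length $T$, the matrix of arrivals has independent $\Binom(T, \rho/n)$ entries, so by the $k$-factor result (applied with $m = T$ and $p = \rho/n$), as soon as $\rho T \geq 152 \log n$, the associated bipartite multigraph admits a $k$-factor of size $k = \rho T - O\bigl(\sqrt{\rho T \log n}\bigr)$ with probability $1 - n^{-16}$. Decomposing such a $k$-factor into $k$ perfect matchings via K\"onig's edge-coloring theorem yields $k$ full-matching schedules; applying them over $k$ consecutive time slots processes exactly $k$ packets from every row and every column, with no wasted service opportunities.

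The simplest instantiation is a plain batching policy: time is partitioned into blocks of length $T$, arrivals in block $\ell$ are accumulated into $\bA^{(\ell)}$, and in block $\ell+1$ the policy applies the $k$ perfect matchings from a $k$-factor of $\bA^{(\ell)}$ in the first $k$ slots and processes the residual (whose row/column degrees are at most $O(\sqrt{\rho T \log n})$ by standard Chernoff concentration) in the remaining slots via another K\"onig edge-coloring. Feasibility is ensured as soon as $(1-\rho) T \geq C \sqrt{\rho T \log n}$, i.e., $T = \Omega(\log n / (1-\rho)^2)$, yielding an expected total queue size of $O(nT)$ and recovering the $O(n\log n / (1-\rho)^2)$ bound of \cite{NMC07}.

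To push the scaling down to $O\bigl(n(1-\rho)^{-4/3} \log f\bigr)$, I would use shorter batches and carry unprocessed overflow from one batch to the next, scheduling it alongside the $k$-factor matchings of subsequent batches. The row-wise overflow heuristically evolves as a random walk with drift $-(1-\rho) T$ and per-step fluctuation $O(\sqrt{\rho T \log n})$; by choosing $T$ strictly below the naive feasibility threshold while still large enough to satisfy the $k$-factor applicability condition $\rho T \geq 152 \log n$, and by optimizing the trade-off between the batch-accumulation contribution $O(nT)$ to the queue size and the steady-state overflow contribution (the latter involves a maximum over $n$ rows/columns together with a union bound over a suitable time horizon, contributing the $\log f$ factor), the optimal balance is achieved at $T = \Theta\bigl((1-\rho)^{-4/3} \log f\bigr)$, producing the claimed upper bound.

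The main obstacle will be making this overflow analysis rigorous despite the matching constraints: the per-row overflow processes are \emph{not} independent, because the joint scheduling of overflow packets must respect the matching structure (a slot allocated to an overflowed row ties up a column and vice versa), and the $k$-factor theorem must be applied cleanly to each batch's arrival matrix, which requires the policy's handling of overflow to not destroy the i.i.d.\ $\Binom(T, \rho/n)$ structure of that matrix. My expected remedy is a layered policy that maintains overflow in a separate auxiliary structure and discharges it using the slack between the $k$-factor-schedule phase and the batch duration; the key technical lemma will bound the stationary distribution of the layered overflow process. The tight residual bound $\sqrt{\rho T \log n}$ from the $k$-factor theorem is essential in this balance: replacing it by the weaker worst-case $O(\sqrt{T \log n})$ from K\"onig-type decompositions would push the optimal $T$ back up and degrade the final bound to $(1-\rho)^{-2}$.
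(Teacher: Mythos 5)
Your first step (recovering the $O\left(n\log n/(1-\rho)^{2}\right)$ bound of \cite{NMC07} from Theorem \ref{theorem:rand_low_env} plus a clearing of the residual) is sound, but the step that is supposed to produce the $(1-\rho)^{-4/3}$ scaling has a genuine gap. With short batches of length $T$ and the leftover carried across batches, the carried work is not a random walk with drift $-(1-\rho)T$. Any $k$-factor of a batch satisfies $k\le\min_i r_i$, and indeed $\beta_0$ in \eqref{eq:beta0} sits $\sqrt{304\rho T\log f}$ below the mean row sum, so each batch necessarily leaves a residual whose maximum row/column sum is $\Theta\left(\sqrt{\rho T\log f}\right)$, while only $T-k$ slots of the batch remain for discharging carried work (in the $k$ factor slots each row is already served once, on a current-batch packet). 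Accounting by clearance time (Theorem \ref{theorem:up_env}), the natural recursion is $\gamma_{\ell+1}\le\left(\gamma_\ell-(T-k)\right)^{+}+\left(\max_i r_i^{(\ell)}-k\right)$; its drift is $\EE\left[\max_i r_i\right]-T$, in which $k$ cancels, and this is positive with high probability whenever $T$ is below a constant multiple of $(1-\rho)^{-2}\log n$ --- exactly the threshold you are trying to beat. At your proposed $T=\Theta\left((1-\rho)^{-4/3}\log f\right)$ the unavoidable per-batch residual per row, $\Theta\left((1-\rho)^{-2/3}\log f\right)$, dwarfs the per-batch slack $(1-\rho)T=\Theta\left((1-\rho)^{-1/3}\log f\right)$, so no Kingman-type bound is available along this route. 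To do better you would need the aggregated overflow to be servable at close to one packet per slot per backlogged row, i.e.\ a factor-type guarantee for the overflow matrix; but that matrix is precisely the non-i.i.d.\ object to which Theorem \ref{theorem:rand_low_env} does not apply, and the ``layered auxiliary structure'' you invoke supplies no mechanism for it. (A secondary issue: your failure probabilities are $n^{-16}$, whereas the backlog analysis needs them polynomially small in $f=n\vee(1-\rho)^{-1}$, since $(1-\rho)^{-1}$ can far exceed $n$.)

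The paper's construction avoids carrying residuals across batches altogether. It keeps the batch length at the standard $b=\Theta\left((1-\rho)^{-2}\log f\right)$ --- long enough that each batch is finished within its own service period w.h.p.\ --- and shortens only the lag before service begins: inside each batch it runs a cascade of subintervals $I_0=d>I_1>\cdots>I_\ell$ with $I_u=d-19u\sqrt{d\log f}$, where the arrivals of $I_{u-1}$ are served during $I_u$ using the full matchings of an $I_u$-factor (Theorem \ref{theorem:rand_low_env}), so the $\Theta\left(\sqrt{d\log f}\right)$ loss per stage is absorbed deterministically by the shrinking subinterval lengths rather than accumulating as backlog; what remains of the batch has row/column sums at most $d+s-b$ on the good event and is removed in a normal clearing phase via Theorem \ref{theorem:up_env}, with the rare failures feeding a backlog process controlled by Theorem \ref{theorem:discrete_kingman}. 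Covering $b$ by $\ell=\Theta\left(\sqrt{d/\log f}\right)$ subintervals of length $\Theta(d)$ forces $d=\Theta\left(b^{2/3}(\log f)^{1/3}\right)=\Theta\left((1-\rho)^{-4/3}\log f\right)$, and since at any time the unserved work is essentially at most $d$ slots of arrivals plus an $O(1)$ expected backlog, the $O(nd)$ bound of Theorem \ref{theorem:main} follows. This within-batch pipelining, rather than shorter batches with cross-batch overflow, is the idea missing from your proposal.
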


It is worth noting that the bound in \eqref{eq:main} holds 
uniformly over all $n$ and $\rho$; this is different from the upper bound in \cite{STZ16}, 
for example, which requires a joint scaling of parameters $n$ and $\rho$.  
Let us also note that the upper bound in \eqref{eq:main} 
improves over the previously best-known upper bound 
 $O\left(n^{1.5}(1-\rho)^{-1} \log \frac{1}{1-\rho}\right)$ \cite{STZ16}
 when $n \leq \frac{1}{1-\rho} \leq n^{1.5}$, 
 and the previously best-known upper bound $O\left(\frac{n\log n}{(1-\rho)^2} \right)$ 
 when $\frac{1}{1-\rho} < n$.
In particular, when $\frac{1}{1-\rho} = n$, our upper bound improves
the state-of-the-art from $O\left(n^{2.5}\log n\right)$ to $O\left(n^{7/3}\log n\right)$, as 
shown in the following corollary. 

\begin{corollary}\label{corollary:main}
Let $n \in \NN$, and consider an $n\times n$ input-queued switch for which the $n^2$ arrival streams 
form independent Bernoulli processes with a common arrival rate $\rho/n$, where $\rho = 1-1/n$. 
Then, there exists a scheduling policy under which the total expected queue size 
is upper bounded by $c n^{7/3} \log n$ at all times; i.e., for any $\tau \in \NN$, 
\begin{equation}\label{eq:main2}
\EE\left[\sum_{i,j=1}^n Q_{ij}(\tau)\right] \leq c n^{\frac{7}{3}} \log n.
\end{equation}
\end{corollary}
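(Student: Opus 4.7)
The plan is to observe that Corollary \ref{corollary:main} is a direct specialization of Theorem \ref{theorem:main} to the particular choice $\rho = 1 - 1/n$, so the entire proof amounts to substituting this value of $\rho$ into the bound \eqref{eq:main} and simplifying.

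Concretely, I would first note that under the hypothesis $\rho = 1 - 1/n$, we have $1-\rho = 1/n$ and hence $(1-\rho)^{-4/3} = n^{4/3}$. Next, I would compute the quantity $f$ defined in \eqref{eq:f}: since $\frac{1}{1-\rho} = n$, we have $f = n \vee n = n$, and therefore $\log f = \log n$. Plugging these into \eqref{eq:main} yields
\begin{equation*}
\EE\left[\sum_{i,j=1}^n Q_{ij}(\tau)\right] \leq c n \cdot n^{4/3} \cdot \log n = c n^{7/3}\log n,
\end{equation*}
for every $\tau \in \NN$, which is exactly the bound claimed in \eqref{eq:main2}.

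There is no substantive obstacle here, since Theorem \ref{theorem:main} is stated uniformly over all $n$ and $\rho < 1$; in particular it applies to the specific sequence $\rho_n = 1 - 1/n$. The only subtlety worth flagging is that one must verify the bound \eqref{eq:main} is truly uniform in $(n, \rho)$ with the \emph{same} constant $c$ (as is asserted in Theorem \ref{theorem:main}), so that the constant $c$ appearing in \eqref{eq:main2} can indeed be taken independent of $n$. This is automatic from the statement of Theorem \ref{theorem:main}, and no further argument is needed.
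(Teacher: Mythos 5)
Your proof is correct and matches the paper's (implicit) argument exactly: the corollary is the direct specialization of Theorem \ref{theorem:main} to $\rho = 1-1/n$, with $f = n$ and $(1-\rho)^{-4/3} = n^{4/3}$, yielding the $cn^{7/3}\log n$ bound with the same uniform constant $c$.
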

We remark that our upper bound is still a multiplicative factor of $O(n^{1/3} \log n)$ away
from the conjecture scaling $\Theta(n^2)$ in \cite{STZopen}, 
when $1-\rho = \Theta(1/n)$. It is open whether
this gap can be further reduced. 
Let us also remark that we will only prove Theorem \ref{theorem:main} 
for sufficiently large $n$ and $\rho$ that is sufficiently close to $1$. 
For smaller $n$, the theorem remains valid by considering a stabilizing policy
such as Max-Weight, and by re-choosing the constant $c$ to be large enough. 
Similarly, for $\rho$ that is bounded away from $1$ by some fixed amount $\delta >0$, the theorem remains 
valid by considering, for example, the batching policy in \cite{NMC07}, 
and by re-choosing $c$ to be large enough.

\subsection{Policy overview}\label{ssec:policy_overview}
We now proceed to describe the high-level ideas underpinning our proposed policy;
the formal policy description is deferred to Section \ref{sec:policy}.
The policy proposed in this paper, like the one in \cite{STZ16}, is of a batching type. 
In the standard batching policy (see e.g., \cite{NMC07}), 
time is divided into consecutive intervals (batches) of equal lengths, 
and packets that arrive during a batch are served only in subsequent batches. 
To guarantee stability, all packets that arrive during one batch 
should be successfully served during the very next batch, with high probability. 
By choosing the smallest batch length that guarantees stability, 
which turns out to be $O\left(\log n/(1-\rho)^2\right)$, 
the policy in \cite{NMC07} gives an upper bound of $O\left(n\log n/(1-\rho)^2\right)$ 
on the expected total queue size. 
See Figure \ref{sfig:standard} for an illustration. 

Different from the standard batching policy in \cite{NMC07}, 
our policy, like the one in \cite{STZ16}, starts serving packets from a given batch much earlier, 
before the arrivals of the entire batch. When $\Omega(n^{-1.5}) \leq 1-\rho \leq O(1/n)$, 
our policy starts serving packets even earlier than the one in \cite{STZ16}, 
resulting in improved queue-size scaling. 
More specifically, our policy starts serving packets from a given batch of length $b = O\left((1-\rho)^{-2} \log f\right)$ 
(recall the definition of $f$ in Eq. \eqref{eq:f})
just after the first $d = O\left((1-\rho)^{-4/3}\log f\right)$ time slots of that batch, 
and is still able to finish serving all packets of this batch in $b$ time slots, with high probability. 
To achieve this, our policy divides a batch into further subintervals of lengths $I_0, I_1, \cdots, I_{\ell}$, 
with $I_0 = d$. Then, the $u$th subinterval is used to serve arrivals from 
the $(u-1)$st subinterval, for $u = 1, 2, \cdots, \ell$. 
See Figure \ref{sfig:our} for an illustration.

\begin{figure}[ht]
  \begin{subfigure}[b]{0.4\textwidth}
    \includegraphics[width=\textwidth]{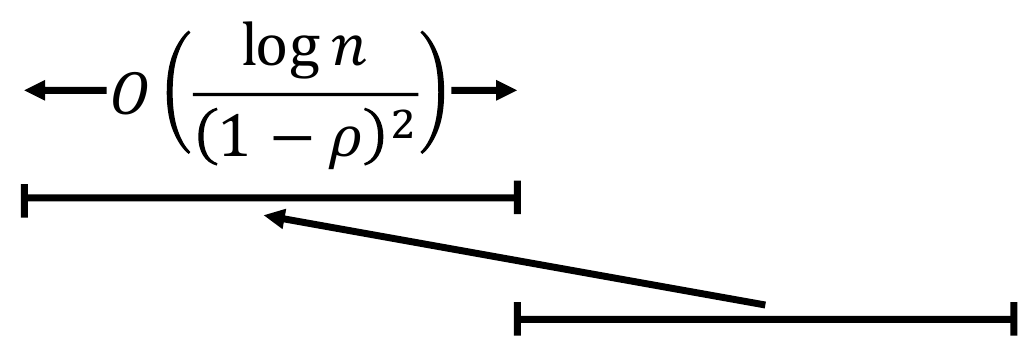}
    \caption{Illustration of a standard batching policy.}
    \label{sfig:standard}
  \end{subfigure}
  \hfill
  \begin{subfigure}[b]{0.45\textwidth}
    \includegraphics[width=\textwidth]{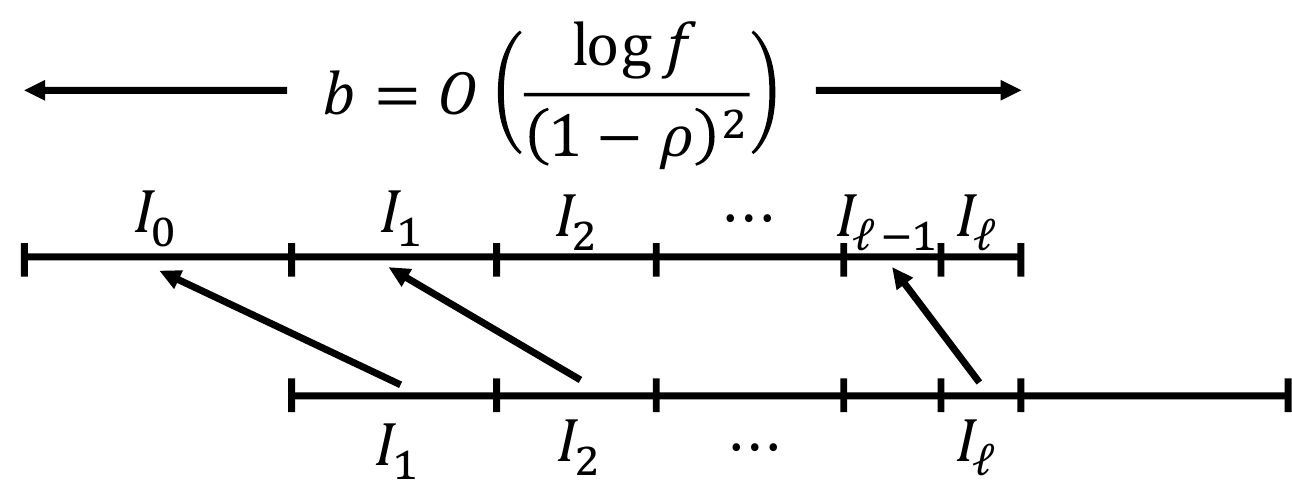}
    \caption{Illustration of our policy.}
    \label{sfig:our}
  \end{subfigure}
\caption{Comparison between a standard batching policy and our policy.}
\label{fig:comparison}
\end{figure}

To guarantee stability, $I_u$'s are carefully chosen so that the subintervals are efficiently utilized. 
More specifically, note that the maximum number of packets that can be served in a period of $I_u$ slots is $nI_u$, 
which can be achieved if and only if a full matching is used in each slot, and no offered service is wasted. 
Then, the following is required of our policy: 
\begin{itemize}
\item For each $u = 1, 2, \cdots, \ell$, $I_u$ is chosen to be largest possible, so that
the $u$th subinterval can serve $nI_u$ packets from the $(u-1)$st subinterval, with high probability. 
\end{itemize}

It is by no means obvious that such a requirement can be met. 
To illustrate the inherent difficulty, consider a $3\times 3$ switch, whose arrivals from 
a subinterval constitutes the following matrix: 
 \begin{equation}\label{ex:low_env_counter}
   \bq = 
  \left[ {\begin{array}{ccc}
   0 & a & 0 \\
   a & 0 & a \\
   0 & a & 0 
  \end{array} } \right],
\end{equation}
where $a$ is some positive integer. 
Then, it is easy to see that for these arrivals, it is impossible to serve $3$ packets in one time slot, 
let alone serving $3t$ packets in any $t$ time slots. 

The fact that the aforementioned requirement can be met has to do with the statistical regularity 
that originates from the stochastic arrivals with uniform rates. In particular, 
using the equivalence between queue matrices and bipartite multigraphs described in the Introduction, 
the requirement that $nI_u$ packets can be served from arrivals of $I_{u-1}$ time slots 
 translates into the existence of an $I_u$-factor of an \ER random bipartite multigraph,  
where the number of edges between a pair of left and right nodes is independently and binomially distributed 
with parameters $I_{u-1}$ and $\rho/n$. Our characterization of the largest $k$-factors in
\ER random bipartite multigraph implies that such a requirement can be met with probability
$1- f^{-16}$, provided that
$$
I_u \le \rho I_{u-1}  - \sqrt{ 304 \rho I_{u-1}  \log f }. 
$$
Hence, we choose $d=I_0 >\cdots >I_\ell = \Theta(d)$ for $\ell=\Theta\left(\sqrt{d/\log f } \right)$.  
Since $\sum_{u=0}^\ell I_u=b$, we get that $d=\Theta\left( b^{2/3} \log^{1/3} f \right)$,
which reduces to $d=\Theta\left( (1-\rho)^{-4/3} \log f \right)$ when $b=\Theta\left( (1-\rho)^{-2} \log f \right)$.
Thus, the expected number of total arrivals in the first subinterval  is 
$n \rho d =\Theta\left( n (1-\rho)^{-4/3} \log f \right)$, which will be shown to 
dictate the order of magnitude of the expected total queue size in Section \ref{sec:analysis}, 
establishing the upper bound in Theorem \ref{theorem:main}.

\section{Preliminaries}\label{sec:prelim} 
Here we provide some preliminary facts that will be useful for our analysis later on. 
The same facts were used in \cite{STZ16} and we state  them here for completeness. 
\paragraph{Concentration Inequalities.} We will use the following concentration bounds 
on the tail probabilities of binomial random variables (adapted from Theorem 2.4 in \cite{FCgraph}). 
\begin{theorem}\label{theorem:bin_tail}
Let $X$ be a binomial random variable with parameters $m$ and $p$, 
so that $\EE[X] = mp$ and $Var(X) = mp(1-p)$. Then, for any $x > 0$, we have 
\begin{align}
\text{(Lower tail)} \quad \quad \PP(X \leq \EE[X] - x) & \leq \exp \left\{-\frac{x^2}{2\EE[X]}\right\}; \label{eq:bin_conc} \\
\text{(Upper tail)} \quad \quad \PP(X \geq \EE[X] + x) & \leq \exp \left\{-\frac{x^2}{2(\EE[X]+x/3)}\right\}. 
\label{eq:bin_conc1}
\end{align}
\end{theorem}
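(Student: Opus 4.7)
The plan is to prove both tail bounds via the classical Chernoff method. Since $X$ is a sum of $m$ i.i.d.\ Bernoulli$(p)$ random variables, its moment generating function factorizes as $\EE[e^{tX}] = (1-p+pe^t)^m$. The elementary inequality $1+y \le e^y$ applied to $y = p(e^t-1)$ upgrades this to $\EE[e^{tX}] \le \exp(\mu(e^t-1))$, where $\mu := \EE[X] = mp$. Applying Markov's inequality to $e^{tX}$ for $t > 0$ in the upper-tail case, and to $e^{-tX}$ for $t > 0$ in the lower-tail case, immediately yields
\begin{equation*}
\PP(X \ge \mu + x) \le \exp\left(\mu(e^t-1) - t(\mu+x)\right), \qquad \PP(X \le \mu - x) \le \exp\left(\mu(e^{-t}-1) + t(\mu-x)\right).
\end{equation*}

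Next I would optimize over $t > 0$. For the upper tail the optimum is $t^\ast = \log(1+x/\mu)$; for the lower tail (assuming $0 < x < \mu$) it is $t^\ast = \log(\mu/(\mu-x))$. Substituting back produces the standard Cram\'er-form bounds $\PP(X \ge \mu+x) \le \exp(-\mu\, h(x/\mu))$ and $\PP(X \le \mu-x) \le \exp(-\mu\, h(-x/\mu))$, where $h(u) := (1+u)\log(1+u) - u$ on $u > -1$. The boundary case $x \ge \mu$ for the lower tail is trivial since $\PP(X \le 0) = (1-p)^m \le e^{-\mu}$, which already majorizes $\exp(-x^2/(2\mu))$ in that range.

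To convert the Cram\'er-form bounds into the Bernstein form stated in the theorem, I would invoke two elementary inequalities for $h$. For the lower tail, one checks that the function $f(u) := h(-u) - u^2/2 = (1-u)\log(1-u) + u - u^2/2$ satisfies $f(0) = f'(0) = 0$ and $f''(u) = 1/(1-u) - 1 \ge 0$ on $[0,1)$, so $h(-u) \ge u^2/2$, and the lower-tail bound of the theorem follows by setting $u = x/\mu$. For the upper tail, the requisite inequality is $h(u) \ge u^2 / (2+2u/3)$ for all $u \ge 0$, which after substituting $u = x/\mu$ gives $\exp(-x^2 / (2(\mu + x/3)))$.

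The main technical obstacle is this last inequality $(1+u)\log(1+u) - u \ge u^2/(2+2u/3)$, which is what forces the $x/3$ correction term in the denominator and makes Bernstein's bound valid across the whole range of $x$. I would handle it by clearing denominators and studying $\phi(u) := (6+2u) h(u) - 3u^2$, verifying $\phi(0) = \phi'(0) = \phi''(0) = 0$ while $\phi'''(u) = 2/(1+u)^2 \ge 0$, which forces $\phi \ge 0$ on $[0,\infty)$. This is the only nonmechanical step; the rest of the proof is routine Chernoff bookkeeping.
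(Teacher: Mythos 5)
The paper does not prove this statement at all: Theorem \ref{theorem:bin_tail} is quoted as a preliminary, adapted from Theorem 2.4 of \cite{FCgraph}, so your self-contained Chernoff--Bernstein derivation is not comparable to any in-paper argument but is the standard route to exactly these bounds, and it is essentially correct. Two small slips are worth fixing. First, your formula for the third derivative is off: with $\phi(u) = (6+2u)h(u) - 3u^2$ and $h(u)=(1+u)\log(1+u)-u$ one gets $\phi'''(u) = 6h''(u) + (6+2u)h'''(u) = \frac{6}{1+u} - \frac{6+2u}{(1+u)^2} = \frac{4u}{(1+u)^2}$, not $\frac{2}{(1+u)^2}$; fortunately the corrected expression is still nonnegative on $[0,\infty)$, so the conclusion $\phi\ge 0$, hence $h(u)\ge \frac{u^2}{2(1+u/3)}$, survives unchanged. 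Second, your treatment of the lower-tail boundary case is stated too broadly: the comparison $e^{-\mu}\le \exp\left(-x^2/(2\mu)\right)$ holds only for $x\le \sqrt{2}\,\mu$, so it does not cover all $x\ge\mu$ as written; the clean fix is to observe that for $x>\mu$ the event $\{X\le \mu-x\}$ is empty because $X\ge 0$, so the bound is trivial there, and the $e^{-\mu}$ comparison is needed only at $x=\mu$, where it does hold since $\mu \ge \mu/2 = x^2/(2\mu)$. With these repairs the proof is complete and matches the cited result.
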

\paragraph{Kingman Bound for Discrete-Time $G/G/1$ Queue.} 
Consider the following discrete-time $G/G/1$ queueing system. 
For each $\tau \in \NN$, $X(\tau)$ denotes the number of packets that arrive 
during time slot $\tau$, $Y(\tau)$ denotes the number of packets that can be served 
during time slot $\tau$, and $Z(\tau)$ denotes the queue size 
at the beginning of time slot $\tau$. Suppose that $X(\tau)$ are i.i.d across time, 
so are $Y(\tau)$, and $\left\{X(\tau): \tau \in \NN\right\}$ 
and $\left\{Y(\tau): \tau \in \NN\right\}$ are independent. 
The dynamics of the system is given by
\begin{equation}\label{eq:gg1_dyn}
Z(\tau+1) = \left(Z(\tau) + X(\tau) - Y(\tau)\right)^+.
\end{equation}
The timing convention in Eq. \eqref{eq:gg1_dyn} is different from 
that in Eq. \eqref{eq:dynamics}. In Eq. \eqref{eq:gg1_dyn}, arrivals take place before the services 
in each time slot. This timing convention is used for analyzing the so-called 
{\em backlogged} packets defined in Section \ref{sec:policy} (cf. Eq. \eqref{eq:backlog_dyn}).

Let $\lambda = \EE[X(\tau)]$, $m_{2x} = \EE\left[X^2(\tau)\right]$, 
$\mu = \EE[Y(\tau)]$ and $m_{2y} = \EE\left[Y^2(\tau)\right]$. 
The following theorem is Theorem 4.2 from \cite{STZ16}, which is a minor adaptation of Theorem 3.4.2 of \cite{SYbook2014}.
\begin{theorem}[Discrete-time Kingman bound]\label{theorem:discrete_kingman}
Consider the aforementioned discrete-time $G/G/1$ system 
with $Z(1) = 0$ and $\lambda < \mu$. Then, 
\begin{equation}\label{eq:kingman}
\EE[Z(\tau)] \leq \frac{m_{2x} + m_{2y} - 2\lambda \mu}{2(\mu-\lambda)}, \quad \text{for all } \tau \in \NN. 
\end{equation}
\end{theorem}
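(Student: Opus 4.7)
The plan is a Lyapunov/drift argument with the quadratic function $V(z) = z^2$, tailored to exploit the initial condition $Z(1) = 0$. The starting point is the elementary pointwise inequality $(a^+)^2 \leq a^2$ applied to \eqref{eq:gg1_dyn}:
\[
Z(\tau+1)^2 \leq \bigl(Z(\tau) + X(\tau) - Y(\tau)\bigr)^2 = Z(\tau)^2 + 2 Z(\tau)\bigl(X(\tau) - Y(\tau)\bigr) + \bigl(X(\tau) - Y(\tau)\bigr)^2.
\]
Taking expectations, I would exploit two independence facts: $Z(\tau)$ is a measurable function of $\{X(t), Y(t) : t < \tau\}$ and is therefore independent of $(X(\tau), Y(\tau))$, and $X(\tau) \perp Y(\tau)$ by hypothesis. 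These give $\EE[Z(\tau)(X(\tau)-Y(\tau))] = (\lambda - \mu)\EE[Z(\tau)]$ and $\EE[(X(\tau)-Y(\tau))^2] = m_{2x} + m_{2y} - 2\lambda\mu$, yielding the one-step drift inequality
\[
\EE[Z(\tau+1)^2] - \EE[Z(\tau)^2] \leq -2(\mu - \lambda) \EE[Z(\tau)] + \bigl(m_{2x} + m_{2y} - 2\lambda\mu\bigr).
\]

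Next, I would establish that $\EE[Z(\tau)^2]$ is non-decreasing in $\tau$, so that the left-hand side above is non-negative. Since $Z(1) = 0 \leq Z(2)$ almost surely, and since for any fixed $x, y$ the map $z \mapsto (z + x - y)^+$ is non-decreasing in $z$, a straightforward induction coupling the two trajectories $\{Z(\tau)\}$ and $\{Z(\tau+1)\}$ through the common arrival/service sequences yields $Z(\tau) \leq Z(\tau+1)$ almost surely for every $\tau \in \NN$. In particular $\EE[Z(\tau+1)^2] \geq \EE[Z(\tau)^2]$, so the drift inequality collapses to
\[
0 \leq -2(\mu-\lambda)\EE[Z(\tau)] + (m_{2x} + m_{2y} - 2\lambda\mu),
\]
which rearranges (using $\lambda < \mu$) to \eqref{eq:kingman}.

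The main subtlety in this outline is the monotonicity step. On its own, the drift inequality controls only a stationary second moment, or at best a Cesàro time-average of $\EE[Z(\tau)]$, neither of which is exactly what the statement demands; the pointwise bound for every $\tau$ is obtained only by using $Z(1) = 0$ together with the sample-path monotonicity of the one-step recursion. Everything else reduces to routine algebra and the basic independence structure of the $G/G/1$ model.
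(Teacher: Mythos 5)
Your drift computation is fine, but the monotonicity step as stated is wrong, and it is the crux of the argument. The claim ``$Z(\tau)\leq Z(\tau+1)$ almost surely'' is false for this recursion: take $X(1)-Y(1)=1$ and $X(2)-Y(2)=-1$; then $Z(2)=1$ but $Z(3)=0$. Queue sample paths started from empty are not monotone. The coupling you invoke does not apply, because the two trajectories you are comparing, $\{Z(\tau)\}_\tau$ and $\{Z(\tau+1)\}_\tau$, are driven by \emph{shifted} input sequences: at step $\tau$ the first uses $(X(\tau),Y(\tau))$ while the second uses $(X(\tau+1),Y(\tau+1))$, so the ``same arrival/service sequence, larger initial state'' induction never gets off the ground. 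What is true, and all you need, is \emph{stochastic} monotonicity: from the Lindley recursion with $Z(1)=0$ one has $Z(\tau+1)=\max_{0\leq k\leq \tau}\sum_{t=\tau-k+1}^{\tau}\bigl(X(t)-Y(t)\bigr)$, and since the increments are i.i.d., reversing their order shows this is equal in distribution to $\max_{0\leq k\leq \tau}\sum_{t=1}^{k}\bigl(X(t)-Y(t)\bigr)$, which \emph{is} pathwise non-decreasing in $\tau$. (Equivalently: compare the queue started empty at time $1$ with the one started empty at time $2$, both driven by the same inputs, and then use shift-invariance of the i.i.d. input law.) Hence $\EE[Z(\tau+1)^2]\geq \EE[Z(\tau)^2]$, and your rearrangement gives \eqref{eq:kingman}; you should also note $\EE[Z(\tau)^2]<\infty$ (e.g.\ $Z(\tau)\leq\sum_{t<\tau}X(t)$ and $m_{2x}<\infty$) so the subtraction is legitimate.

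For context, the paper does not prove this theorem at all: it quotes Theorem 4.2 of \cite{STZ16}, itself an adaptation of Theorem 3.4.2 of \cite{SYbook2014}, and that proof is essentially the quadratic-drift argument you wrote combined with the distributional monotonicity of the queue started from empty. So your overall route matches the cited source; only the pathwise-monotonicity claim needs to be replaced by the stochastic-ordering argument above.
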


\paragraph{Optimal Clearing Policy.} A key component in any batching-type policy is 
to finish serving all packets that arrive during a batch as quickly as possible. 
Thus, it is important to understand the {\em minimum clearance time} of a queue matrix, 
i.e., the minimum number of slots required to finish serving all packets of a queue matrix using feasible schedules, 
assuming no new arrivals. 
The following theorem (also Theorem 4.3 of \cite{STZ16}) 
provides a precise characterization of the minimum clearance time.
\begin{theorem}\label{theorem:up_env}
Let $\bq = (q_{ij})_{i,j=1}^n$ be an $n\times n$ queue matrix. Let
\begin{equation}\label{eq:row_col_sum}
r_i = \sum_{j'=1} q_{ij'} \quad \text{and} \quad c_j = \sum_{i'=1}^n q_{i'j}
\end{equation}
be the $i$th row sum and $j$th column sum of $\bq$, respectively. 
Let
\begin{equation}
\gamma = \max \left\{ \max_i r_i, \;  \max_j c_j \right\}.
\end{equation}
Then, $\gamma$ is precisely the minimum clearance time of the queue matrix $\bq$.
\end{theorem}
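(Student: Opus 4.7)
The plan is to establish matching lower and upper bounds on the minimum clearance time. The lower bound is immediate from the feasibility constraints in \eqref{eq:schedule}: in any single slot, at most one packet can be served from row $i$, so clearing all $r_i$ packets in row $i$ takes at least $r_i$ slots; the same argument applies to every column. Taking the maximum over rows and columns gives that the minimum clearance time is at least $\gamma$.

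For the upper bound I would exhibit an explicit sequence of $\gamma$ schedules $\bsig(1),\ldots,\bsig(\gamma)\in\cS$ whose entrywise sum dominates $\bq$, i.e., $\sum_{t=1}^{\gamma}\sigma_{ij}(t)\ge q_{ij}$ for all $i,j$. Applied in the absence of new arrivals, such schedules clearly clear every queue within $\gamma$ slots, since any offered service beyond $q_{ij}$ is simply wasted. The construction proceeds in two steps. First, I would pad $\bq$ into a nonnegative integer matrix $\bq'$ whose row and column sums all equal $\gamma$. Concretely, set $\tilde r_i=\gamma-r_i$ and $\tilde c_j=\gamma-c_j$; the identity $\sum_i r_i=\sum_j c_j$ (both count the total number of packets) yields $\sum_i\tilde r_i=\sum_j\tilde c_j$, and a standard constructive argument (e.g., a Gale--Ryser style greedy filling) produces a nonnegative integer matrix $\tilde\bq$ with row sums $\tilde r_i$ and column sums $\tilde c_j$. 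Take $\bq'=\bq+\tilde\bq$.

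Second, I would decompose $\bq'$ into $\gamma$ $n\times n$ permutation matrices $P_1,\ldots,P_\gamma$ by viewing $\bq'$ as the biadjacency matrix of a $\gamma$-regular bipartite multigraph on $n+n$ vertices. Hall's marriage theorem guarantees that any $k$-regular bipartite multigraph with $k\ge 1$ admits a perfect matching, and removing such a matching from a $k$-regular multigraph leaves a $(k-1)$-regular one, so iterating produces the desired decomposition $\bq'=\sum_{t=1}^{\gamma}P_t$. Each $P_t$ is a full matching and therefore lies in $\cS$; since $\bq\le\bq'$ entrywise, the schedules $\bsig(t)=P_t$ clear $\bq$ within $\gamma$ slots, proving the upper bound.

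The main obstacle is essentially notational, since both the padding step and the Hall/König decomposition are classical. The one point worth being explicit about is that the iterative extraction of perfect matchings actually terminates after $\gamma$ rounds (equivalently, that $\bq'$ admits a proper $\gamma$-edge coloring), which is König's edge-coloring theorem specialized to bipartite multigraphs of maximum degree $\gamma$. A minor bookkeeping point is verifying that in the switch dynamics the offered service sequence $\bsig(1),\ldots,\bsig(\gamma)$ indeed empties the queues: since there are no arrivals, a straightforward induction on $t$ using \eqref{eq:dynamics} shows $Q_{ij}(\tau+t+1)\le\bigl(q_{ij}-\sum_{s=1}^{t}\sigma_{ij}(s)\bigr)^+$, which equals $0$ at $t=\gamma$ by construction.
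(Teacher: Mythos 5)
Your proposal is correct. Note that the paper does not actually prove this theorem: it imports it as Theorem 4.3 of \cite{STZ16}, and the only argument given in the text is the remark establishing the lower bound (each $r_i$ and $c_j$ can decrease by at most $1$ per slot), which coincides with your first paragraph. Your upper-bound construction --- padding $\bq$ to a matrix $\bq'$ with all row and column sums equal to $\gamma$ and then decomposing $\bq'$ into $\gamma$ full matchings via Hall's theorem / K\"onig's edge-coloring theorem (a Birkhoff--von Neumann-type decomposition) --- is the standard proof of the cited result, so your write-up is a complete and faithful reconstruction of the argument the paper relies on; the only cosmetic remark is that the padding step needs nothing as strong as Gale--Ryser, since with unbounded nonnegative integer entries a northwest-corner (transportation) filling with prescribed row sums $\gamma-r_i$ and column sums $\gamma-c_j$ always exists once the totals match, and the trivial case $\gamma=0$ is immediate.
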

Note that since in each time slot, at most one packet can be cleared from each row or column, 
each $r_i$ and $c_j$ decreases by at most $1$. Thus, $\gamma$ is clearly a lower bound 
on the minimum clearance time. Theorem \ref{theorem:up_env} states that there exists 
an {\em optimal clearing policy} that finishes serving all packets of $\bq$ in precisely $\gamma$ slots. 

\section{Lower Envelopes of Queue Matrices}\label{sec:envelope}
As mentioned in the Introduction (Section \ref{sec:intro}) and the policy overview (Section \ref{sec:main}), 
of central importance to our policy is the ability to serve arrivals of a subinterval 
using full matchings, without any wasted offered services. This motivates us to define 
the {\em lower envelopes} of a queue matrix. 

\begin{definition}\label{def:envelope}
Let $\bq = (q_{ij})_{i,j=1}^n$ be an $n\times n$ queue matrix. 
A matrix $\bg = (g_{ij})_{i,j=1}^n$ is a $\beta$-{\em lower envelope} of $\bq$ if 
(a) for all $i$ and $j$, $g_{ij} \in \ZZ_+$ and $g_{ij} \leq q_{ij}$; and (b)
\begin{equation}\label{eq:low_env}
\beta = \sum_{i'=1}^n g_{i'j} = \sum_{j'=1}^n g_{ij'} \quad \text{for all } i, j \in [n].
\end{equation}
\end{definition}
\begin{remark}\label{rmk:factor_decomp}
Let us note that if $\bg$ is a $\beta$-lower envelope of queue matrix $\bq$, 
then by Theorem \ref{theorem:up_env}, packets of $\bg$, which also belong to $\bq$, 
can be cleared optimally using $\beta$ full matchings, without any wasted service. 
We call any such $\beta$ full matchings the {\em schedules prescribed by} the $\beta$-lower envelope $\bg$ of $\bq$.
\end{remark}

Recall that a queue matrix $\bq$ can be viewed equivalently as the biadjacency matrix of a bipartite multigraph $G$
with left vertex set $[n]$ and right vertex set $[n]$, where 
left vertex $i$ is connecting to right vertex $j$ with $q_{ij}$ multiple edges if $q_{ij}\ge 1$.
In this vein, a $\beta$-lower envelope $\bg$ of $\bq$ can be equivalently viewed as a 
$\beta$-factor of $G$, i.e., a spanning $\beta$-regular subgraph of $G$.
Note that a $1$-factor of $G$ is simply a perfect matching of $G$.  Due to this equivalence, 
we will use the terminologies ``a $\beta$-lower envelope of a queue matrix'' and
``a $\beta$-factor of a bipartite multigraph'' interchangeably, whichever is more convenient. 


The next proposition provides a necessary and sufficient condition for the existence of 
a $\beta$-lower envelope of any queue matrix, and is a simple adaptation of the Gale-Ryser theorem~\cite{gale1957theorem,ryser1957combinatorial}. 
Let $\cR, \cC \subseteq [n]$ be non-empty subsets of $[n]$. 
For an $n\times n$ queue matrix $\bq = (q_{ij})_{i,j=1}^n$, 
we use $\bq_{\cR, \cC}$ to denote the submatrix $(q_{ij})_{i \in \cR, j \in \cC}$. 

\begin{proposition}\label{prop:exist_lowenv}
Consider an $n\times n$ queue matrix $\bq = (q_{ij})_{i,j=1}^n$. 
There exists a $\beta$-lower envelope of $\bq$ if and only if for any $k \times \ell$ 
submatrix $\bq_{\cR, \cC} = (q_{ij})_{i \in \cR, j \in \cC}$ of $\bq$, 
where $\cR, \cC \subseteq [n]$, $|\cR| = k$ and $|\cC| = \ell$, 
we have 
\begin{equation}\label{eq:exist_lowenv}
\beta (k+\ell-n) \leq \sum_{i\in \cR, j \in \cC} q_{ij}.
\end{equation}
\end{proposition}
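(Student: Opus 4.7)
The plan is to prove the two directions separately: necessity by a direct double counting using the row/column sum constraints on $\bg$, and sufficiency by constructing an auxiliary $s$-$t$ flow network and invoking the integer max-flow / min-cut theorem.

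For necessity, suppose $\bg = (g_{ij})$ is a $\beta$-lower envelope of $\bq$. Fix $\cR, \cC \subseteq [n]$ with $|\cR|=k$ and $|\cC|=\ell$. Using the row-sum constraint for each $i \in \cR$, I write
$$\sum_{j \in \cC} g_{ij} \;=\; \beta - \sum_{j \in \cC^c} g_{ij},$$
sum over $i \in \cR$, and bound $\sum_{i \in \cR,\, j \in \cC^c} g_{ij} \le \sum_{i \in [n],\, j \in \cC^c} g_{ij} = (n-\ell)\beta$ by the column-sum constraint, which yields
$$\sum_{i \in \cR,\, j \in \cC} g_{ij} \;\ge\; k\beta - (n-\ell)\beta \;=\; \beta(k+\ell-n).$$
The envelope condition $g_{ij} \le q_{ij}$ then gives \eqref{eq:exist_lowenv}.

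For sufficiency, I would build the standard capacitated network: a source $s$ with arcs $s \to L_i$ of capacity $\beta$, a sink $t$ with arcs $R_j \to t$ of capacity $\beta$, and middle arcs $L_i \to R_j$ of capacity $q_{ij}$. An integer $s$-$t$ flow of value $n\beta$ necessarily saturates every source and sink arc, so the restrictions of the flow to the middle arcs form a nonnegative integer matrix $\bg = (g_{ij})$ with $g_{ij} \le q_{ij}$ and row/column sums all equal to $\beta$, i.e., a $\beta$-lower envelope. It therefore suffices to show the min cut has value at least $n\beta$. Given any cut specified by the $s$-side $S$, let $\cR = \{i : L_i \in S\}$ and $\cC = \{j : R_j \in S\}$; the cut capacity equals
$$(n-|\cR|)\beta \;+\; \sum_{i \in \cR,\, j \in \cC^c} q_{ij} \;+\; |\cC|\beta.$$
Setting $k = |\cR|$ and $\ell' = |\cC^c| = n - |\cC|$, the inequality ``cut capacity $\ge n\beta$'' reduces exactly to $\sum_{i \in \cR,\, j \in \cC^c} q_{ij} \ge \beta(k+\ell'-n)$, which is the hypothesis \eqref{eq:exist_lowenv} applied to the pair $(\cR, \cC^c)$. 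Edge cases where $\cR$ or $\cC^c$ is empty make the RHS non-positive, so the bound holds trivially. An appeal to the integer max-flow / min-cut theorem then produces the desired $\bg$.

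There is no serious obstacle; the only bit that needs care is aligning the max-flow / min-cut cut structure with the submatrix condition, namely noticing that the cut naturally involves $\cC^c$ rather than $\cC$, so the hypothesis must be invoked at the complement (which is why \eqref{eq:exist_lowenv} is posited for \emph{all} pairs of subsets). An alternative route would be to derive the statement from the Gale-Ryser theorem by splitting multi-edges, but the flow-based argument is self-contained and simultaneously delivers integrality of $\bg$.
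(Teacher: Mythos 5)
Your proposal is correct and takes essentially the same approach as the paper: the sufficiency direction builds exactly the same source--sink network with arc capacities $\beta$, $q_{ij}$, $\beta$, bounds every cut by $n\beta$ via \eqref{eq:exist_lowenv} (the paper parametrizes the cut by the $S^c$-side right vertices, so it invokes the hypothesis directly rather than at the complement as you do), and extracts an integral max flow as the envelope. The only difference is in the necessity direction, where you give a direct double-counting argument on $\bg$ instead of the paper's comparison of the flow induced by $\bg$ with a specific cut; both are valid, and yours is slightly more elementary.
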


\begin{proof} 
The proposition follows as a simple corollary of Feasibility Theorem in~\cite{gale1957theorem}. For the sake of completeness and ease of reference, we record here a direct proof based on the max-flow min-cut theorem~\cite{ford1956maximal}.

Consider a network $\bar{G}$ with a source node $s$, a sink node $t$, $n$ left vertices corresponding to the rows of $\bq$, and $n$ right vertices corresponding to the columns of $\bq$. 
The source node $s$ is connecting to every left vertex $i \in [n]$ with a directed edge $(s,i)$ of capacity $\beta$. 
The sink node $t$ is connecting to every right vertex $j \in [n]$ with a directed edge $(j, t)$ of capacity $\beta$. Moreover, every left vertex $i$ is connected to  every right vertex 
$j$ with a directed edge $(i,j)$ of capacity $q_{ij}$. 

We first show \eqref{eq:exist_lowenv} is necessary. Suppose there exists a $\beta$-lower envelope of $\bq$, denoted by $\bg$. 
Define $f(s,i)=\beta$ for every left vertex $i \in [n]$,
$f(j,t)=\beta$ for every right vertex $j \in [n]$, and $f(i,j)=g_{ij}$ for every left vertex $i \in [n]$ and every right vertex $j \in [n]$.
Then $f$ is a network flow from $s$ to $t$ on $\bar{G}$. Hence, the max flow from $s$ to $t$ on network $\bar{G}$ is at least $\beta n$. 
For any given set $\cR \subset [n]$ of left vertices and set $\cC \subset [n]$ of right vertices, 
consider a cut $S=\{s\}\cup \cR \cup ([n]-\cC)$ and $S^c=\{t\} \cup ([n]-\cR) \cup \cC$. The capacity of cut $(S,S^c)$ is 
$$
c(S, S^c) \triangleq \beta \left( n - |\cR| \right) + \beta \left( n - |\cC| \right) + \sum_{i\in \cR, j \in \cC} q_{ij}.
$$
According to the max-flow min-cut theorem, the value of the max flow equals to the minimum cut capacity. Thus 
$$
\beta n \le \beta \left( n - |\cR| \right) + \beta \left( n - |\cC| \right) + \sum_{i\in \cR, j \in \cC} q_{ij},
$$
which is equivalent to  \eqref{eq:exist_lowenv}. 

We next show \eqref{eq:exist_lowenv} is sufficient. Suppose  \eqref{eq:exist_lowenv} holds.  
Let $(S, S^c)$ be any cut such that $s \in S$ and $t \in S^c$. 
Let $A$ denote the set of left vertices and $B$ denote the set of right vertices. 
The capacity of cut $(S,S^c)$ satisfies
\begin{align*}
c(S, S^c) & = c(S-s, S^c - t)  + c(s, S^c-t) +  c(S-s, t) \\
& = c\left((S-s) \cap A, (S^c-t) \cap B \right) + \\
& \quad +  \beta \left| (S^c-t) \cap A \right| + \beta \left| (S-s) \cap B \right|,
\end{align*}
where we write $S-s=S\setminus \{s\}$ for ease of notation. 
Let $\cR= (S-s) \cap A$ and $\cC=(S^c-t)\cap B$. Then 
$
\left| (S^c-t) \cap A \right| = n - |\cR| 
$
and $
\left| (S-s) \cap B \right| = n-|\cC|$.
It follows that 
\begin{align*}
c(S, S^c) & = c\left(\cR, \cC \right) + \beta \left( n - |\cR| \right) + \beta \left( n - |\cC| \right) \\
&= \sum_{i\in \cR, j \in \cC} q_{ij} + \beta \left( n - |\cR| \right) + \beta \left( n - |\cC| \right)  \ge \beta n,
\end{align*}
where the last inequality holds due to  \eqref{eq:exist_lowenv}. 
Moreover, if $S=\{s\}$, then the capacity of the cut $(S, S^c)$ is equal to $\beta n$.
Therefore,  the minimum cut capacity of 
network $\bar{G}$ with respect to $s$ and $t$ is 
$\beta n$.

Now, from the max-flow min-cut theorem, there is a flow $f$ from $s$ to $t$ on $\bar{G}$ such that 
the value of the flow is $\beta n$. In particular, for every left vertex $i \in [n] $ and right vertex $j \in [n]$,
$$
\beta  = f(s, i)  = \sum_{j' \in [n]} f(i, j' ) = \sum_{i' \in [n]} f(i', j) = f(j,t).
$$
 Moreover, since the edge capacity is integer-valued,
this flow $f$ can be further chosen to be integer-valued. 
Define an $n\times n$ matrix
$\bg$ such that $g_{ij}= f(i,j)$ for every left vertex $i$ and every right vertex $j$. 
Then $g_{ij} \in \ZZ_+$ and 
$$
\beta = \sum_{j' \in [n]} g(i, j' ) = \sum_{i' \in [n]} g(i', j).
$$
Therefore, $\bg$ is a $\beta$-lower envelope of $\bq$, completing the proof. 
\end{proof}

The main result of this section is the following tight characterization of 
the existence of a $\beta$-lower envelope for a random queue matrix. 
\begin{theorem}\label{theorem:rand_low_env}
Let $\bQ = (Q_{ij})_{i,j=1}^n$ be an $n\times n$ random queue matrix 
with i.i.d entries that are binomially distributed with parameters $m$ and $p$. 
Let $f$ be an additional, positive parameter such that $f \geq n$.
Suppose
\begin{equation}\label{eq:fm_cond}
pmn \geq 152 \log f.
\end{equation}
Then, with probability $1 - 1/f^{16}$, $\bQ$ has a $\beta_0$-lower envelope with 
\begin{equation}\label{eq:beta0}
\beta_0 = \lfloor pmn - \sqrt{304 pmn \log f} \rfloor.
\end{equation}
\end{theorem}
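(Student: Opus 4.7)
The plan is to apply the Gale--Ryser-type criterion in Proposition \ref{prop:exist_lowenv} and union-bound over all pairs $(\cR, \cC)$. Since $\beta_0(k + \ell - n) \leq 0$ whenever $k + \ell \leq n$ (with $k = |\cR|$, $\ell = |\cC|$), only the case $k + \ell > n$ is nontrivial. For such a pair, $Y := \sum_{i \in \cR, j \in \cC} Q_{ij}$ is a binomial random variable with parameters $k\ell m$ and $p$, so $\EE[Y] = k\ell mp$. Using the identity $k\ell - n(k + \ell - n) = (n-k)(n-\ell)$, the mean exceeds the target $\beta_0 s$ (where $s = k + \ell - n$) by at least the non-negative quantity
\[
x(k, \ell) := (n-k)(n-\ell)\, mp + s\sqrt{304\, pmn \log f},
\]
and the binomial lower tail bound in Theorem \ref{theorem:bin_tail} yields $\PP(Y < \beta_0 s) \leq \exp\bigl(-x(k,\ell)^2/(2k\ell mp)\bigr)$.

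Next, I would union-bound over $(\cR, \cC)$. Substituting $a = n - k$ and $b = n - \ell$, the sum runs over $a, b \geq 0$ with $a + b < n$. Using $(u + v)^2 \geq u^2 + v^2$ for $u, v \geq 0$, the exponent is at least $A + B$, where
\[
A := \frac{a^2 b^2\, mp}{2(n-a)(n-b)}, \qquad B := \frac{152\, n (n - a - b)^2 \log f}{(n-a)(n-b)}.
\]
Combined with the standard bound $\log \binom{n}{a} \leq 2 a' \log f$, where $a' = \min(a, n-a)$ (valid since $f \geq n$ implies $\log(en/a') \leq 2 \log f$ for $a' \geq 1$), it suffices to establish
\[
A + B \geq (2a' + 2b' + 18) \log f
\]
for all admissible $(a, b)$; this makes each of the at most $n^2$ terms in the union bound at most $f^{-18}$, which yields the desired total $f^{-16}$.

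The main obstacle is this deterministic inequality. Since $a + b < n$ precludes $a, b > n/2$ simultaneously, I would proceed by case analysis. In case (i), $a, b \leq n/2$ (so $a' = a$, $b' = b$), using $(n-a)(n-b) \leq n^2$ together with the hypothesis $pmn \geq 152 \log f$, the inequality reduces to $76\, a^2 b^2 / n^2 + 152 (n - a - b)^2 \geq (2a + 2b + 18) n$, which I would verify directly over the feasible simplex. In case (ii), WLOG $a \leq n/2 < b$ (so $b' = n - b =: \ell$ and $s = \ell - a \geq 1$), using $n - a \leq n$ and $b \geq n/2$ the inequality reduces to the quadratic $171\, a^2 - 306\, a \ell + 150\, \ell^2 - 18\, \ell \geq 0$ in $a$, whose discriminant $-8964\, \ell^2 + 12312\, \ell$ is non-positive for all $\ell \geq 2$; the boundary case $a = 0$ then gives $B = 152\, \ell \log f \geq (2\ell + 18) \log f$ directly. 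The constant $304$ in the definition of $\beta_0$ is calibrated precisely to make this discriminant non-positive, and hence to make the tight concentration-union-bound tradeoff close out.
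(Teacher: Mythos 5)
Your proposal is correct, and it shares the paper's skeleton---Proposition \ref{prop:exist_lowenv}, the binomial lower-tail bound of Theorem \ref{theorem:bin_tail}, and a union bound over submatrix sizes---but the key estimate is decomposed differently. The paper fixes, for each size pair $(k,\ell)$ with $k+\ell>n$, the threshold $k\ell mp-\sqrt{38(n-|k-\ell|)k\ell mp\log f}$, pays the crude count $n^{n-k+\ell}\le f^{n-k+\ell}$ against a tail of order $f^{-19(n-k+\ell)}$, and then proves a separate deterministic claim comparing $\beta(k+\ell-n)$ to that threshold, which after setting $x=k/n$ reduces to the one-variable inequality $x+\frac1x-2\ge\frac{1}{\sqrt2}\left(\sqrt{x}+\frac2x-4\right)$ on $[1/2,1]$. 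You instead exploit the identity $k\ell-n(k+\ell-n)=(n-k)(n-\ell)$ to record the exact mean slack $\EE[Y]-\beta_0 s\ge (n-k)(n-\ell)mp+s\sqrt{304\,pmn\log f}$, split the exponent via $(u+v)^2\ge u^2+v^2$ into your $A$ and $B$, and balance against the sharper entropy count ${n \choose a}\le f^{2\min(a,n-a)}$ (which holds even more simply via ${n\choose a}\le n^{\min(a,n-a)}\le f^{\min(a,n-a)}$); the closing step then becomes a two-variable inequality settled by case analysis and a discriminant computation. Both routes hinge on the same calibration of the constants $152$ and $304$; yours makes explicit the role of the mean slack $(n-k)(n-\ell)mp$, which is exactly what rescues the regime $a\approx b\approx n/2$ where your $B$ term degenerates, whereas the paper keeps all slack inside the per-size threshold and obtains a cleaner final inequality. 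Your case (ii) computation checks out: the reduction to $171a^2-306a\ell+150\ell^2-18\ell\ge0$ and the discriminant $-8964\ell^2+12312\ell\le0$ for $\ell\ge2$ are both correct, and $\ell=1$ forces $a=0$, which you handle.

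The one soft spot is case (i), where the inequality $\frac{76a^2b^2}{n^2}+152(n-a-b)^2\ge(2a+2b+18)n$ over $0\le a,b\le n/2$, $a+b<n$ is asserted ("verify directly over the feasible simplex") rather than proved. It is true, and closes as follows: write $s=n-a-b\ge1$; if $s\ge n/2$ then $152s^2\ge38n^2\ge(2(a+b)+18)n$; if $s<n/2$, minimizing $ab$ at fixed $a+b=n-s$ over $a,b\in[0,n/2]$ gives $ab\ge\frac n2\left(\frac n2-s\right)$, and the requirement reduces to $171s^2-17ns+\frac{11}{4}n^2-18n\ge0$, whose discriminant $-1592n^2+12312n$ is negative for $n\ge8$, while the finitely many cases $n\le7$ are checked directly. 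So this is a matter of writing out a routine verification, not a flaw in the approach.
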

\begin{remark}
Note that for any  $\beta$-lower envelope $\bQ$, it must satisfy that
$$
\beta \le \min \left\{ \min_i r_i, \; \min_j c_j \right\},
$$
where $r_i$ and $c_j$ as given in \eqref{eq:row_col_sum} 
are the $i$th row sum and $j$th column sum of $\bq$, respectively.
Since both $r_i$ and $c_j$ are binomially distributed with parameters $mn$ and $p$, 
one can show that if $mnp \ge \delta_1 \log n$ for a sufficiently large constant 
$\delta_1>0$, then there exist a constant $\delta_2>0$ such that with high probability, 
$$
 \min_i r_i \le pmn - \delta_2 \sqrt{pmn \log n}
$$
and hence $\beta \le pmn - \delta_2 \sqrt{pmn \log n}.$ Therefore, the lower bound to $\beta$ given in
\eqref{eq:beta0} with $f=n$ is tight up to a constant factor. 
\end{remark}
\begin{remark}
Theorem \ref{theorem:rand_low_env} holds for the special case where $m=1$. In this case,
$\bQ$ can be viewed as the biadjacency matrix of an \ER bipartite simple graph $G$
with edge probability $p$. Therefore, our result
implies that if $np=\omega(\log n)$, then with probability converging to $1$, 
$G$ has $\beta$-factor with $\beta\sim np$.

In comparison, previous work~\cite{csaba2007regular} shows that any simple bipartite graph
with $n$ left (right) vertices and minimum degree $\delta n$ where $\delta \geq 1/2$, 
has a $\beta$-factor with $\beta =\lfloor n \frac{\delta+\sqrt{2\delta-1}}{2}\rfloor$.
Since with high probability 
the \ER bipartite graph $G$ has minimum degree $\sim np$ if $p \ge 1/2$,
the result in~\cite{csaba2007regular} implies that if $p \ge 1/2$, then 
$G$ has $\beta$-factor with $\beta\sim n \frac{p+\sqrt{2p-1}}{2}$ with high probability.

Note that $\frac{p+\sqrt{2p-1}}{2} \le p$ with equality if and only $p \in \{0,1\}.$
Therefore, our result improves over  the previous result~\cite{csaba2007regular} in the context of \ER bipartite random graphs. As can be seen in the proof, we achieve this improvement by exploiting certain nice random structures in $\bQ.$
\end{remark}


 
\begin{proof}
By Proposition \ref{prop:exist_lowenv}, it suffices to show that with high probability, the inequalities 
\begin{equation}\label{eq:exist_lowenv0}
\beta (|\cR|+|\cC|-n) \leq \sum_{i\in \cR, j \in \cC} Q_{ij}
\end{equation}
hold for any non-empty subsets $\cR$ and $\cC$ of $[n]$, where 
\begin{equation}\label{eq:rand_low_env}
\beta = pmn - \sqrt{304 pmn \log f}.
\end{equation} 
To this end, fix some $k, \ell \in [n]$, and suppose that $k \geq \ell$. 
WLOG, also suppose that $k+\ell > n$, since the inequality \eqref{eq:exist_lowenv0} is satisfied deterministically 
for subsets $\cR$ and $\cC$ of $[n]$ if $|\cR| + |\cC| \leq n$. 
There are ${n \choose k} = {n \choose n-k} \leq n^{n-k}$ subsets of $[n]$ with size $k$, 
and ${n \choose \ell} \leq n^{\ell}$ subsets of $[n]$ with size $\ell$, 
so the number of $k \times \ell$ submatrices of $\bQ$ is no larger than $n^{n-k}\cdot n^{\ell} = n^{n-k+\ell}$. 
Furthermore, for any fixed $k\times \ell$ submatrix, say $\bQ_{\cR, \cC}$, of $\bQ$, 
since the entries of $\bQ_{\cR, \cC}$ are i.i.d binomial random variables with parameters $m$ and $p$, 
the sum $\sum_{i\in \cR, j \in \cC} Q_{ij}$ is binomially distributed with parameters $k\ell m$ and $p$. 
Thus, by Theorem \ref{theorem:bin_tail}, for fixed subsets $\cR$ and $\cC$ of $[n]$ with $|\cR|=k$ and $|\cC|=\ell$, 
we have
\begin{align}
&~\PP\left(\sum_{i\in \cR, j \in \cC} Q_{ij} \leq k\ell mp -\sqrt{38(n-k+\ell) k\ell mp \log f}\right) \nonumber \\
\leq &~\exp\left(-\frac{38(n-k+\ell) k\ell mp \log f}{2k\ell mp}\right) = f^{-19(n-k+\ell)}. \label{eq:bin1}
\end{align}
Let $\udl{\bQ}(k, \ell)$ denote the minimum value over the sums of entries of all $k \times \ell$ submatrices of $\bQ$, 
i.e., 
\begin{equation}\label{eq:Qkl}
\udl{\bQ}(k, \ell) = \min_{\cR, \cC \subseteq [n]: |\cR|=k, |\cC|=\ell} \sum_{i \in \cR, j \in \cC} Q_{ij}.
\end{equation}
Fix subsets $\cR$ and $\cC$ of $[n]$ with sizes $k$ and $\ell$, respectively. Then, by union bound, we have
\begin{align}
&~\PP\left(\udl{\bQ}(k, \ell) \leq k\ell mp - \sqrt{38(n-k+\ell) k\ell mp \log f} \right) \nonumber \\
\leq &~n^{n-k+\ell} \PP\left(\sum_{i\in \cR, j \in \cC} Q_{ij} \leq k\ell mp - \sqrt{38(n-k+\ell) k\ell mp \log f}\right) \nonumber \\
\leq &~n^{n-k+\ell} \cdot f^{-19(n-k+\ell)} \leq f^{-18(n-k+\ell)} \leq f^{-18}, \label{eq:lowenv_tail1}
\end{align}
where, for the second last inequality, we used the fact that $f \geq n$, and for the last inequality, 
we used the fact that $n-k+\ell \geq \ell \geq 1$.
Thus, by union bound again, 
\begin{align}
&~\PP~\Bigg(\udl{\bQ}(k, \ell) \leq k\ell mp - \sqrt{38(n-k+\ell) k\ell mp \log f} \nonumber \\
&\quad \text{ for some } (k, \ell) \text{ with } k \geq \ell, k+\ell >n\Bigg) \nonumber \\
\leq &~ \frac{1}{2}n(n+1) f^{-18}, \label{eq:lowenv_tail2}
\end{align}
since there are at most $n(n+1)/2$ pairs of $(k, \ell) \in [n]\times [n]$ with $k \geq \ell$. 
Similarly, we can get 
\begin{align}
&~\PP~\Bigg(\udl{\bQ}(k, \ell) \leq k\ell mp - \sqrt{38(n-\ell+k) k\ell mp \log f} \nonumber \\
&\quad \text{ for some } (k, \ell) \text{ with } k < \ell, k+\ell >n\Bigg) \nonumber \\
\leq &~ \frac{1}{2}n(n-1) f^{-18}. \label{eq:lowenv_tail3}
\end{align}
Thus, by Ineq. \eqref{eq:lowenv_tail2} and \eqref{eq:lowenv_tail3}, and by union bound, we have 
\begin{align}
&~\PP~\Bigg(\udl{\bQ}(k, \ell) \leq k\ell mp - \sqrt{38(n-|k-\ell|) k\ell mp \log f} \nonumber \\
&\quad \text{ for some } (k, \ell) \text{ with } k+\ell >n\Bigg) \nonumber \\
\leq &~ n^2 f^{-18} \leq f^{-16}, \label{eq:lowenv_tail4}
\end{align}
where for the last inequality, we used the fact that $f\geq n$.
We now claim that for all $k, \ell \in [n]$ with $k+\ell > n$,
\begin{equation}\label{eq:exist_lowenv1}
\beta (k+\ell-n) \leq k\ell mp - \sqrt{38(n - |k-\ell|) k\ell mp \log f},
\end{equation}
where $\beta$ is given by Eq. \eqref{eq:rand_low_env}. 

{\em Proof of the claim.} We first prove the claim for the case where $k \geq \ell$ and $k+\ell > n$. 
In this case, $k > n/2$, and $n-|k-\ell| = n-k+\ell < 2 \ell$. To prove the claim in this case, 
it suffices to show that for any $k \in [n]$ such that $k > n/2$, 
\begin{equation}\label{eq:exist_lowenv2}
\beta \left(2-\frac{n}{k}\right) \leq kmp - \sqrt{76 kmp \log f}.
\end{equation}
To see that Ineq. \eqref{eq:exist_lowenv2} implies Ineq. \eqref{eq:exist_lowenv1}, 
note that because $k \geq \ell$, we have
\begin{equation}
2-\frac{n}{k} = 1 - \frac{n-k}{k} \geq 1 - \frac{n-k}{\ell} = \frac{k+\ell-n}{\ell}. 
\end{equation}
Thus, Ineq. \eqref{eq:exist_lowenv2} implies that 
\begin{equation}
\beta \cdot \frac{k+\ell-n}{\ell} \leq kmp - \sqrt{76 kmp \log f}, 
\end{equation}
and then 
\begin{align}
\beta (k+\ell-n) \leq &~k\ell mp - \ell \sqrt{76 kmp \log f} \nonumber \\
= &~k\ell mp - \sqrt{38\cdot 2\ell\cdot k \ell mp \log f} \nonumber \\
\leq &~k\ell mp - \sqrt{38(n-k+\ell) k \ell mp \log f}, 
\end{align}
recovering Ineq. \eqref{eq:exist_lowenv1}. 
Here, the last inequality follows from the assumption that $k+\ell > n$. 

Next, we prove Ineq. \eqref{eq:exist_lowenv2}. 
Write $x = k/n$, and $\rho = np$. Then, using the expression \eqref{eq:rand_low_env} for $\beta$, 
we can re-write Ineq. \eqref{eq:exist_lowenv2} as 
\begin{equation}\label{eq:exist_lowenv3}
\left(2-\frac{1}{x}\right) \left(\rho m - \sqrt{304\rho m \log f}\right) \leq 
x \rho m - \sqrt{x}\cdot \sqrt{76 \rho m \log f}.
\end{equation}
Re-arranging Ineq. \eqref{eq:exist_lowenv3}, we obtain the equivalent expression 
\begin{equation}\label{eq:exist_lowenv4}
\left(x+\frac{1}{x}-2\right) \rho m\geq \left(\sqrt{x} + \frac{2}{x} - 4\right) \sqrt{76 \rho m \log f},
\end{equation}
Note that $x \in (1/2, 1]$ because $n/2 < k \leq n$.
We will establish next
\begin{align}
x + \frac{1}{x} - 2 \ge \frac{1}{\sqrt{2}} \left(\sqrt{x} + \frac{2}{x} - 4\right), \quad \forall x \in [1/2,1] \label{eq:desired_ineq},
\end{align}
which together with the condition \eqref{eq:fm_cond} where $\rho m \geq 2\cdot 76 \log f$ immediately implies \eqref{eq:exist_lowenv4}.

Note that  \eqref{eq:desired_ineq} clearly holds for $x=1$. 
For any $x \in [1/2, 1)$, $x+\frac{1}{x}-2>0$
and moreover 
$$
\frac{\sqrt{x} + \frac{2}{x} - 4}{x + \frac{1}{x} - 2} = 2 - \frac{2x - \sqrt{x} }{x + \frac{1}{x} - 2}.
$$
Note that $2x-\sqrt{x}=\sqrt{x} (2\sqrt{x}-1)$ is increasing in $x \in [1/2,1)$,
while $x+1/x$ is decreasing in $x \in [1/2,1)$. 
Hence,
$$
\sup_{x \in [1/2,1)} \frac{\sqrt{x} + \frac{2}{x} - 4}{x + \frac{1}{x} - 2}
= \frac{\sqrt{x} + \frac{2}{x} - 4}{x + \frac{1}{x} - 2} \bigg|_{x=1/2} = \sqrt{2},
$$
completing the proof of \eqref{eq:desired_ineq}.

 To summarize, we have established Ineq. \eqref{eq:exist_lowenv4} that is 
 equivalent to Ineq. \eqref{eq:exist_lowenv2}, which implies Ineq. \eqref{eq:exist_lowenv1}, 
 for the case where $k \geq \ell$ and $k+\ell > n$. 
 By symmetry, the proof of Ineq. \eqref{eq:exist_lowenv1} for the case where $k < \ell$ and $k+\ell > n$ 
 follows exactly the same line of argument, with the roles of $k$ and $\ell$ interchanged. 
 Thus, we have established Ineq. \eqref{eq:exist_lowenv1} and hence the claim. $\Box$

To complete the proof of the theorem, note that by Ineq. \eqref{eq:exist_lowenv1} and \eqref{eq:lowenv_tail4}, 
we have that with probability $1-f^{-16}$, 
\begin{equation}
\beta (k+\ell-n) < \udl{\bQ}(k, \ell), \text{ for all } (k, \ell) 
\text{ with } k+\ell > n, 
\end{equation}
in which case the matrix $\bQ$ has a $\beta$-lower envelope, by Proposition \ref{prop:exist_lowenv}. 
This completes the proof of the theorem. 
\end{proof}

\section{Policy Description}\label{sec:policy}
%
%
This section contains a detailed description of our proposed scheduling policy. 
There are similarities between our policy and the one in \cite{STZ16}, but there are also key differences. 
Thus, the policy description will be complete and self-contained, 
with comparisons made to the policy in \cite{STZ16} throughout. 
Let us also note that it suffices for our policy to be well-defined for sufficiently large $n$ and $\rho$ 
that is sufficiently close to $1$, 
since the main result, Theorem \ref{theorem:main}, will only be proved for sufficiently large $n$ 
and $\rho$ sufficiently close to $1$.

Recall that we define $f = n \vee (1-\rho)^{-1}$ in
\eqref{eq:f} for notational convenience. We introduce three parameters, $b$, $d$ and $s$. 
They define the lengths of certain time intervals, which in turn specify different phases of the policy. 
These parameters are given by 
\begin{align} \label{eq:policy_pm}
b &= c_b (1-\rho)^{-2} \log f; \\
d &= c_d (1-\rho)^{-4/3} \log f; \text{ and } \label{eq:policy_d}\\
s &= \rho b + \sqrt{c_s b\log f}.  
\end{align}
Here, $c_b$, $c_d$ and $c_s$ 
are positive constants that are independent of $n$ and $\rho$, and chosen so that
\begin{equation}\label{eq:const_cond}
c_b - \sqrt{c_s c_b} \geq 1, 
\quad c_d^{3/2} \geq 76 c_b, \quad c_d \geq c_b, \quad c_s \geq 30.
\end{equation}
We also assume that 
\begin{equation}\label{eq:nrho_cond}
n \geq 4, \quad \rho \geq \frac{1}{2}, \quad \frac{1}{(1-\rho)^{2/3}} \geq \frac{38}{\sqrt{c_d}} \vee \frac{\sqrt{c_d}}{38} \vee c_d. 
\end{equation}
Similar to \cite{STZ16}, we will treat the parameters $b$, $d$ and $s$ as integers, 
since doing so will not affect our order-of-magnitude estimates. 

Similar to the policy in \cite{STZ16}, we divide time into consecutive intervals of length $b$. 
For $k = 0, 1, 2, \cdots$, the interval consisting of time slots $kb+1, kb+2, \cdots, (k+1)b$ 
is called the {\em $k$th arrival period}, or the {\em $k$th batch}. 
The policy intends to complete serving all $k$th-batch arrivals in the {\em $k$th service period}, 
which is offset from the $k$th batch by a delay of $d$ and consists of time slots 
$kb+d+1, \cdots, (k+1)b+d$. All $k$th-batch packets that are not served during the $k$th service period 
are called {\em backlogged} and will be served in subsequent service periods. 
As we will see, the policy is designed in a way so that the number of backlogged packets will be zero with high probability. 

We now describe the policy in more detail, by first considering what may happen  
before the start of the $k$th service period, $k \in \ZZ_+$.
There may be backlogged packets from previous service periods, 
whose number we denote by $B_k$. By convention, $B_0 = 0$. 
In addition, time slots $kb, kb+1, \cdots, kb+d$ 
do not belong to the $k$th service period; arrivals from these $d$ slots do belong to the $k$th arrival period, 
however, and they accumulate without being served until slot $kb+d+1$, the beginning of the $k$th service period. 

Next, during the $k$th service period, scheduling decisions take place in three phases, 
which are described below and illustrated in Figures \ref{sfig:policy} and \ref{sfig:lower_env}. 
Note that the scheduling decisions in the first phase, the {\em lower envelope phase}, differ substantially 
from those in the first phase of the policy in \cite{STZ16}.
\begin{figure}[ht]
  \begin{subfigure}[b]{0.5\textwidth}
    \includegraphics[width=\textwidth]{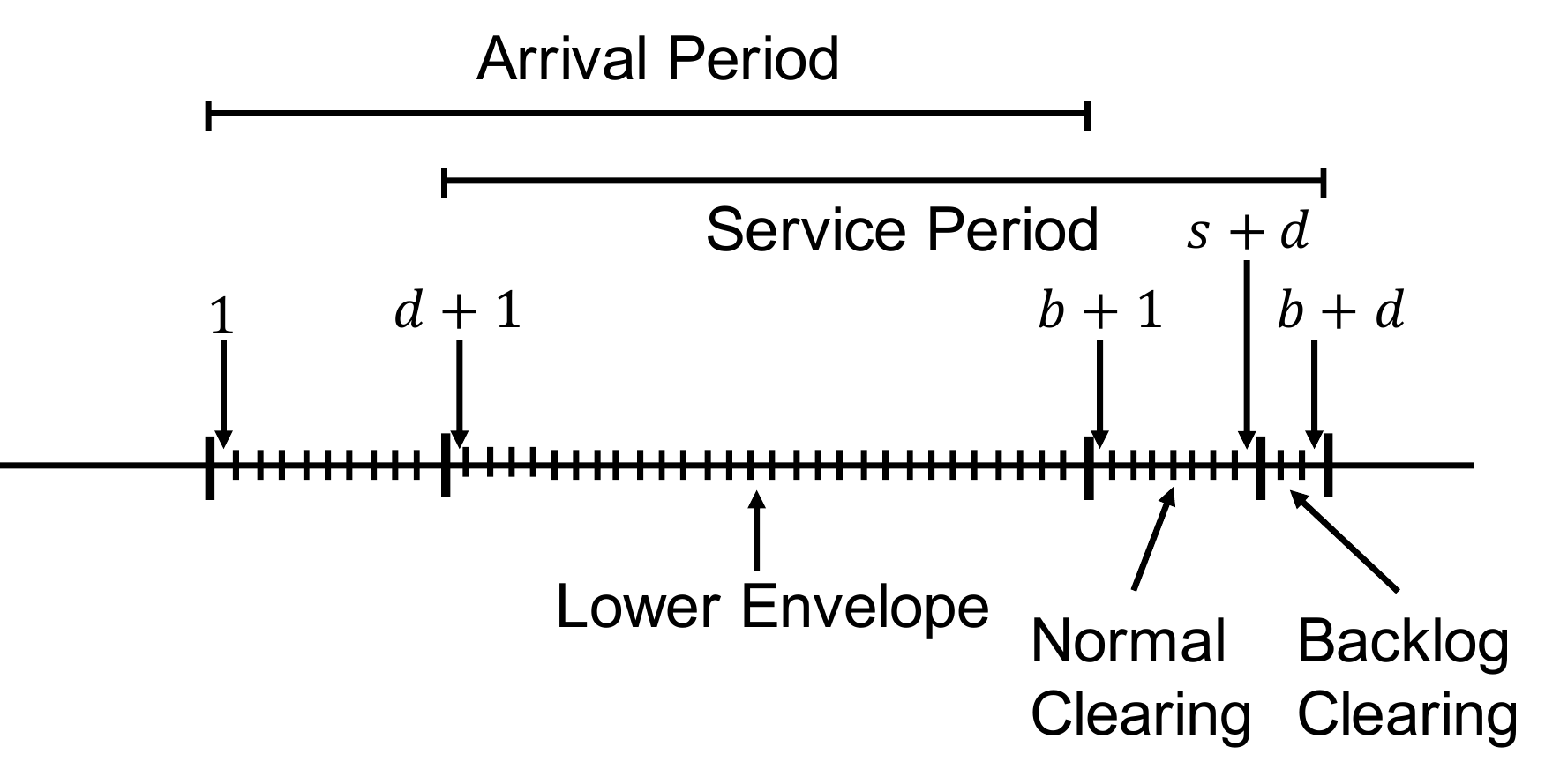}
    \caption{Illustration of a typical arrival period and the phases of a service period. 
    The first slot of the arrival period is numbered as slot $1$, and subsequent slots 
    are numbered consecutively. The same numbering convention is used in Figure \ref{sfig:lower_env} as well.}
    \label{sfig:policy}
  \end{subfigure}
  \hfill
  \begin{subfigure}[b]{0.45\textwidth}
    \includegraphics[width=\textwidth]{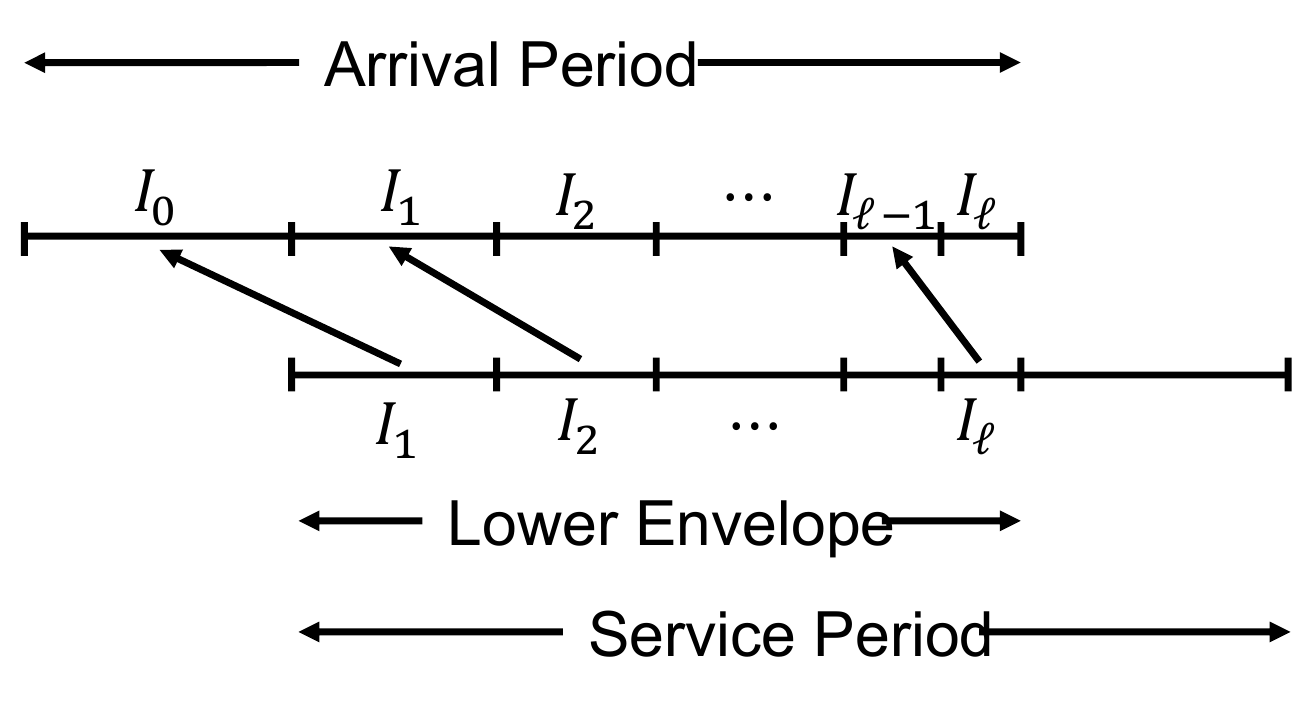}
    \caption{Illustration of the lower envelope phase of the service period. 
    The subinterval $I_0$ consists of time slots $1$ to $d$. The lower envelope phase 
    is divided into consecutive subintervals $I_1, I_2, \cdots, I_{\ell}$, with $I_u$ serving 
    arrivals from $I_{u-1}$, $u \in \{1, 2, \cdots, \ell\}$.}
    \label{sfig:lower_env}
  \end{subfigure}
\caption{Illustration of the different phases of our policy.}
\label{fig:policy}
\end{figure}

\begin{enumerate}
\item The $k$th {\em Lower Envelope Phase} consists of the first $b-d$ time slots of the $k$th service period; 
namely, slots $kb+d+1, \cdots, (k+1)b$. This phase is divided further into consecutive {\em subintervals}, 
which we denote by $I_1, I_2, \cdots, I_{\ell}$, where the value of $\ell$ will be specified shortly. 
Let $I_0$ denote the subinterval consisting of 
slots $kb, kb+1, \cdots, kb+d$. With a slight abuse of notation, we also let $I_u$ denote the length of 
the subinterval $I_u$, $u \in \{0, 1, \cdots, \ell\}$, and they are specified as follows:
\begin{equation}\label{eq:sint}
I_u = \left\{\begin{array}{ll}
d - 19 u\sqrt{d \log f}, & \text{ if } u \in \{0, 1, \cdots, \ell-1\}; \\
b - \left(I_0 + I_1 + \cdots + I_{\ell-1}\right) & \text{ if } u = \ell.
\end{array}\right.
\end{equation}
Here, $\ell$ is defined so that $I_{\ell}$ satisfies the condition 
\begin{equation}\label{eq:ell_cond}
0 \leq I_{\ell} \leq d - 19 \ell \sqrt{d \log f}. 
\end{equation}
In Lemma \ref{lem:ell}, we will show that $\ell$ is well-defined, i.e., there exists a unique $\ell \in \NN$ 
such that condition \eqref{eq:ell_cond} holds, and provide order-of-magnitude estimates for $\ell$. 

We now proceed to complete the policy description for the lower envelope phase, by 
describing the scheduling decisions made in each subinterval $I_1, \cdots, I_{\ell}$. 
Let $u \in [\ell]$. By the beginning of subinterval $I_u$, all packets from subinterval $I_{u-1}$ have already arrived. 
During subinterval $I_u$, the policy only handles arrivals from $I_{u-1}$, 
even though there may be unserved packets from times before $I_{u-1}$. 
More specifically, let $\bA^{(u-1)} = \left(A^{(u-1)}_{ij}\right)_{i,j=1}^n$ denote the matrix of 
the numbers of arrivals to the $n^2$ queues during subinterval $I_{u-1}$. 
Let $\beta$ be maximal such that $\bA^{(u-1)}$ has a $\beta$-lower envelope. 
Then, $\bA^{(u-1)}$ also has a $\min\{\beta, I_u\}$-lower envelope. 
During subinterval $I_u$, the policy uses $\min\{\beta, I_u\}$ schedules 
prescribed by the $\min\{\beta, I_u\}$-lower envelope 
(recall Remark \ref{rmk:factor_decomp} after Definition \ref{def:envelope}, 
and note that all schedules prescribed by the lower envelope are full matchings). 
If $\beta < I_u$, the policy {\em idles} for the remaining $I_u - \beta$ time slots. 
By the end of the lower envelope phase, there may be $k$th-batch arrivals 
that still have not been served; all these arrivals are handled in the normal clearing phase, 
and possibly the backlog clearing phase and subsequent service periods.

\item The $k$th {\em Normal Clearing Phase} consists of the next $d+s-b$ slots, 
namely slots $(k+1)b+1, \cdots, kb+d+s$. During this phase, our policy makes the same decisions as that in \cite{STZ16}. 
More specifically, the policy does the following. First, during this phase, 
the policy does not serve any backlogged packets, or any arrivals from the $(k+1)$st batch 
(namely, arrivals from the slots $(k+1)b+1, \cdots, kb+d+s$, 
which take place concurrently with the phase). 
Next, by slot $(k+1)b+1$, the start of the normal clearing phase, all packets from the $k$th batch have already arrived. 
Some of these packets have already been served during the lower envelope phase; 
the remaining ones will be served using the optimal clearing policy described in Section \ref{sec:envelope}; 
cf. the discussion after Theorem \ref{theorem:up_env}. 
It is possible that all $k$th-batch packets are successfully served before the end of the normal clearing phase, 
in which case the policy simply idles for the remaining time slots of this phase. 
On the other hand, it is also possible that by the end of the normal clearing phase, 
there are still unserved $k$th-batch packets, whose number we denote by $U_k$. 
These $U_k$ packets are considered backlogged and then added to the backlog $B_k$ from 
previous service periods.

\item The $k$th {\em Backlog Clearing Phase} consists of the last $b-s$ slots, 
namely slots $kb+d+s+1, \cdots, (k+1)b+d$. Again, the scheduling decisions in this phase 
are the same as those in \cite{STZ16}. 
During this phase, the policy only handles the $B_k+U_k$ backlogged packets accumulated thus far, 
and it does not serve any arrivals from the $(k+1)$st batch, 
similar to what happens during the normal clearing phase. 
We serve the $B_k+U_k$ backlogged packets using some arbitrary policy, 
with the only requirement that the policy serves one packet in each time slot whenever 
there are still backlogged packets remaining. 
Let $B_{k+1}$ denote the number of unserved backlogged packets from the backlog clearing phase, 
which is also exactly the number of backlogged packets at the beginning of the $(k+1)$st service period. 
Based on our description, we can see that the numbers of backlogged packets follow the dynamics given by
\begin{equation}\label{eq:backlog_dyn}
B_{k+1} = \left(B_k+U_k - (b-s)\right)^+, \quad k \in \ZZ_+.
\end{equation}
\end{enumerate}
We have now completed the description of the scheduling policy. 
Note that the total length of the three phases of a service period is $(b-d) + (d+s-b) + (b-s) = b$, 
so a service period has the same length as an arrival period. 
To make sure that the policy is well-defined, we need to verify that the following quantities are nonnegative  
and well-defined: $b-d$, the length of the lower envelope phase; 
$I_1, \cdots, I_{\ell}$, the lengths of the subintervals of the lower envelope phase, 
and $\ell$, the number of these subintervals; 
$d+s-b$, the length of the normal clearing phase; 
and $b-s$, the length of the backlog clearing phase. 
This is accomplished in the next few lemmas. 
\begin{lemma}\label{lem:lep}
The length $b-d$ of the lower envelope phase satisfies 
\begin{equation}\label{eq:lep}
b-d \geq \frac{1}{2} b \geq 1. 
\end{equation}
\end{lemma}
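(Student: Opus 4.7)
The plan is to verify the two inequalities in \eqref{eq:lep} directly from the parameter choices in \eqref{eq:policy_pm}--\eqref{eq:policy_d} and the standing conditions \eqref{eq:const_cond}--\eqref{eq:nrho_cond}. Both inequalities reduce to easy calculations; there is no real obstacle here, only bookkeeping of the constants.

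First I would establish $b - d \geq b/2$, equivalently $d \leq b/2$. Using the definitions,
\begin{equation*}
\frac{d}{b} \;=\; \frac{c_d(1-\rho)^{-4/3}\log f}{c_b(1-\rho)^{-2}\log f} \;=\; \frac{c_d}{c_b}\,(1-\rho)^{2/3}.
\end{equation*}
By the third condition in \eqref{eq:nrho_cond}, $(1-\rho)^{-2/3} \geq c_d$, hence $(1-\rho)^{2/3}\leq 1/c_d$, so $d/b \leq 1/c_b$. It therefore suffices to show $c_b\geq 2$, which I would deduce from the first condition in \eqref{eq:const_cond}: $c_b - \sqrt{c_s c_b}\geq 1$ together with $c_s\geq 30$ gives $(\sqrt{c_b})^2 - \sqrt{30}\,\sqrt{c_b} - 1\geq 0$, and solving this quadratic inequality yields $\sqrt{c_b}\geq (\sqrt{30}+\sqrt{34})/2 > 2$, so in fact $c_b > 4 \geq 2$. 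This gives $d/b\leq 1/c_b\leq 1/2$, i.e.\ $b-d\geq b/2$.

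Second, I would check $b/2\geq 1$, i.e.\ $b\geq 2$. From \eqref{eq:nrho_cond} we have $\rho\geq 1/2$, so $(1-\rho)^{-2}\geq 4$; and $f = n\vee (1-\rho)^{-1}\geq n\geq 4$, so $\log f\geq \log 4 > 1$. Combined with $c_b > 4$ from the previous step, we obtain
\begin{equation*}
b \;=\; c_b(1-\rho)^{-2}\log f \;>\; 4\cdot 4\cdot 1 \;=\; 16,
\end{equation*}
which is much larger than $2$. Putting the two inequalities together establishes \eqref{eq:lep}. The entire argument is just a chain of direct substitutions into the parameter definitions, so the only thing to be careful about is invoking the correct pieces of \eqref{eq:const_cond} and \eqref{eq:nrho_cond} at each step.
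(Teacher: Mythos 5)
Your proof is correct and follows essentially the same route as the paper: both arguments verify $d \le b/2$ by direct substitution of the definitions of $b$ and $d$ together with the condition $(1-\rho)^{-2/3} \ge c_d$ (the paper quotes the slightly stronger consequence $c_b \ge 30$ of \eqref{eq:const_cond}, while you extract the weaker but sufficient bound $c_b > 4$ from the same quadratic inequality), and both then check $b/2 \ge 1$ by crude numerical bounds from \eqref{eq:nrho_cond}.
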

\begin{proof} By \eqref{eq:const_cond}, $c_b \geq 30$, 
and then by \eqref{eq:nrho_cond}, $(1-\rho)^{-2/3} \geq 2c_d/c_b$. Thus,  
\begin{align*}
b-d =&~ c_b (1-\rho)^{-2} \log f - c_d (1-\rho)^{-4/3} \log f \\
= &~ \left(c_b (1-\rho)^{-2/3} - c_d\right) (1-\rho)^{-4/3} \log f \\
\geq &~\left(\frac{1}{2} c_b (1-\rho)^{-2/3} + \frac{1}{2}c_b (1-\rho)^{-2/3} - c_d\right) (1-\rho)^{-4/3} \log f \\
\geq &~\left(\frac{1}{2} c_b (1-\rho)^{-4/3}\right) (1-\rho)^{-4/3} \log f \\
 = &~\frac{c_b}{2} (1-\rho)^{-2} \log f = \frac{1}{2} b,
\end{align*}
where the last inequality follows from the fact that $(1-\rho)^{-2/3} \geq 2c_d/c_b$. 
We also have 
\begin{equation}
\frac{1}{2} b = \frac{c_b}{2} (1-\rho)^{-2} \log f \geq \frac{30}{2} \log 3 \geq 1, 
\end{equation}
by the conditions \eqref{eq:const_cond} and \eqref{eq:nrho_cond}.
\end{proof}
\begin{lemma}\label{lem:ell}
Recall the description of the scheduling policy in the lower envelope phase, 
the subintervals $I_0, \cdots, I_{\ell}$ defined in Eq. \eqref{eq:sint} and $\ell$ 
defined by Ineq. \eqref{eq:ell_cond}. 
Then, $\ell$ is well-defined, in that there exists a unique $\ell$ that satisfies \eqref{eq:ell_cond}. 
Furthermore, 
\begin{equation}\label{eq:ell_bound}
\ell \leq \frac{\sqrt{c_d}}{38} \cdot \frac{1}{(1-\rho)^{2/3}},
\end{equation}
and for $u \in \{0, 1, 2, \cdots, \ell-1\}$, 
\begin{equation}
I_u \geq \frac{1}{2}d.
\end{equation}
\end{lemma}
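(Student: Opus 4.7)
The plan is to reformulate condition \eqref{eq:ell_cond} as a statement about partial sums of the sequence $J_u := d - 19u\sqrt{d\log f}$. Setting $S_k := \sum_{u=0}^k J_u = (k+1)d - \tfrac{19 k(k+1)}{2}\sqrt{d\log f}$, the definition of $I_\ell$ in \eqref{eq:sint} combined with \eqref{eq:ell_cond} is equivalent to $S_{\ell-1} \le b \le S_\ell$ (with the convention $S_{-1} := 0$). Since $(J_u)$ is strictly decreasing, the map $k \mapsto S_k$ is strictly increasing on any range where $J_k > 0$; so uniqueness of $\ell$ is immediate, and existence reduces to exhibiting one integer $k$ with $S_k \ge b$.

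To produce such a $k$ and simultaneously obtain the upper bound \eqref{eq:ell_bound}, I would introduce $M := \sqrt{c_d}/(38(1-\rho)^{2/3})$ and work with $k_0 := \lfloor M \rfloor$. The last clause of \eqref{eq:nrho_cond} ensures $M \ge 1$, hence $k_0 \ge 1$. The single identity $19 M\sqrt{d\log f} = d/2$, which follows from the representation $\sqrt{d} = \sqrt{c_d}(1-\rho)^{-2/3}\sqrt{\log f}$, is the workhorse of the argument: it guarantees $J_u \ge d/2 > 0$ for every $u \le k_0$ (so the $S_k$ are strictly increasing throughout the relevant range), and it allows the estimate
\[ S_{k_0} \;=\; (k_0+1)\Bigl[d - \tfrac{19 k_0}{2}\sqrt{d\log f}\Bigr] \;\ge\; (k_0+1)\cdot \tfrac{3d}{4}. \]
Combining this with the strict inequality $k_0 + 1 > M$ and unpacking the definitions of $b$ and $d$, the inequality $S_{k_0} \ge b$ reduces to the constant-level inequality $3 c_d^{3/2} \ge 152 c_b$, which follows from the hypothesis $c_d^{3/2} \ge 76 c_b$ in \eqref{eq:const_cond}. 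This simultaneously establishes existence of $\ell$ and the bound $\ell \le k_0 \le M$, yielding \eqref{eq:ell_bound}.

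For the final claim, I would note that for $u \in \{0, 1, \ldots, \ell - 1\}$ we have $u \le \ell - 1 \le k_0 - 1 < M$, so $19 u \sqrt{d \log f} < 19 M \sqrt{d \log f} = d/2$, which gives $I_u = J_u > d/2$ directly. The main obstacle, such as it is, is purely bookkeeping: tracking the real-valued $M$ versus its integer floor $k_0$ while ensuring that the constant-level inequalities in \eqref{eq:const_cond} and \eqref{eq:nrho_cond} are used with enough slack to make both the strict inequality $k_0 + 1 > M$ and the threshold $d/2$ close at the same time.
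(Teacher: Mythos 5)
Your proposal is correct and follows essentially the same route as the paper: the paper sets $\bar{\ell} = \sqrt{c_d}\,(1-\rho)^{-2/3}/38$ (treated as an integer, so your floor bookkeeping is absorbed by convention), uses exactly your identity $19\bar{\ell}\sqrt{d\log f} = d/2$ to get $I'_u \geq d/2$ up to $\bar{\ell}$, and shows $\sum_{u=0}^{\bar{\ell}} I'_u \geq \frac{1}{2}\bar{\ell} d \geq b$ via $c_d^{3/2} \geq 76\,c_b$, from which well-definedness, the bound $\ell \leq \bar{\ell}$, and $I_u \geq d/2$ all follow just as in your argument.
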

\begin{proof}
Let $\bar{\ell} = \sqrt{c_d} (1-\rho)^{-2/3}/38$, 
which, similar to $b$, $d$ and $s$, we treat as an integer. 
Let us just note that $\bar{\ell}$ is well-defined, in that 
\[
\bar{\ell} = \sqrt{c_d} (1-\rho)^{-2/3}/38 \geq \frac{38}{\sqrt{c_d}} \cdot \frac{\sqrt{c_d}}{38} = 1,
\]
using the condition $(1-\rho)^{-2/3} \geq 38/\sqrt{c_d}$ from \eqref{eq:nrho_cond}. 

For $u \in \left\{0, 1, \cdots, \bar{\ell}\right\}$, let 
\begin{equation}\label{eq:sint1}
I'_u = d - 19 u \sqrt{d \log f}.
\end{equation}
We will show that 
\begin{itemize}
\item[(a)] $I'_u \geq d/2$ for all $u \in \left\{0, 1, \cdots, \bar{\ell}\right\}$; and 
\item[(b)] $I'_0 + I'_1 + \cdots + I'_{\bar{\ell}} \geq b$; 
\end{itemize}
To establish part (a), note that in view of the definition of $c_d$ given in \eqref{eq:policy_d},
\begin{equation}\label{eq:b_ell}
\bar{\ell} = \frac{d}{38\sqrt{d\log f}} = \frac{1}{2} \cdot \frac{d}{19\sqrt{d\log f}}.
\end{equation}
Thus, for $u \in \left\{0, 1, \cdots, \bar{\ell}\right\}$, 
\begin{align*}
I'_u \geq I'_{\bar{\ell}} & = d - 19\bar{\ell} \sqrt{d\log f} \\
& = d - \frac{1}{2} \cdot \frac{d}{19\sqrt{d\log f}} \cdot 19 \sqrt{d\log f} = \frac{1}{2}d.
\end{align*}
Next, consider part (b). We have 
\begin{align*}
\sum_{u=0}^{\bar{\ell}} I'_u \geq &~ \frac{1}{2}\bar{\ell}d 
= \frac{\sqrt{c_d}}{38} (1-\rho)^{-2/3} \cdot \frac{c_d}{2} (1-\rho)^{-4/3} \log f \\
= &~ \frac{c_d^{3/2}}{76} (1-\rho)^{-2} \log f \\
\geq &~c_b (1-\rho)^{-2} \log f = b,
\end{align*}
where the first inequality follows from the fact the $I'_u \geq d/2$ for all $u$, 
and the last inequality follows from the condition \eqref{eq:const_cond}. 

Having established parts (a) and (b), it is now easy to see that 
$\ell$, as characterized by the condition \eqref{eq:ell_cond}, is well-defined and unique, 
that $\ell \leq \bar{\ell} = \frac{\sqrt{c_d}}{38} (1-\rho)^{-2/3}$, 
and that for $u \in \{0, 1, \cdots, \ell-1\}$, $I_u = I'_u \geq d/2$. 
This concludes the proof of the lemma.
\end{proof}
\begin{lemma}\label{lem:bcp}
The length $b-s$ of the backlog clearing phase satisfies 
\begin{equation}\label{eq:bcp}
b-s = c_r (1-\rho)^{-1} \log f,
\end{equation}
where $c_r = c_b - \sqrt{c_s c_b} \geq 1$. In particular, $b-s\geq 1$. 
\end{lemma}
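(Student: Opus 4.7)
The proof is a direct computation from the definitions, together with the stipulated constants. The plan is to substitute the definitions of $b$ and $s$, factor out common terms, and verify the lower bound $b-s \geq 1$ using the assumed conditions on the constants and parameters.

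First, I would write
\begin{equation*}
b - s = b - \rho b - \sqrt{c_s b \log f} = (1-\rho)\, b - \sqrt{c_s b \log f},
\end{equation*}
using the definition $s = \rho b + \sqrt{c_s b \log f}$. Next, substitute $b = c_b (1-\rho)^{-2} \log f$ into both terms. The first term becomes $c_b (1-\rho)^{-1} \log f$, while the second term simplifies via
\begin{equation*}
\sqrt{c_s \cdot c_b (1-\rho)^{-2} \log f \cdot \log f} = \sqrt{c_s c_b}\,(1-\rho)^{-1} \log f.
\end{equation*}
Collecting these yields $b - s = (c_b - \sqrt{c_s c_b})(1-\rho)^{-1} \log f = c_r (1-\rho)^{-1}\log f$, exactly as claimed.

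For the inequality $c_r \geq 1$, I would simply invoke the explicit assumption $c_b - \sqrt{c_s c_b} \geq 1$ listed among the conditions in \eqref{eq:const_cond}. Finally, to conclude $b - s \geq 1$, I would use that $(1-\rho)^{-1} \geq 1$ (since $\rho \in (0,1)$) together with $\log f \geq \log n \geq \log 4 > 1$, which follows from $n \geq 4$ in the standing assumption \eqref{eq:nrho_cond}; combined with $c_r \geq 1$, this gives $b - s = c_r(1-\rho)^{-1}\log f \geq 1$.

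There is no real obstacle here: the lemma amounts to unwinding the definitions of the parameters and matching them against the constant constraints in \eqref{eq:const_cond} and the parameter regime \eqref{eq:nrho_cond}. The only minor care is in ensuring the square-root simplifies cleanly, which it does because the two instances of $\log f$ combine as a square.
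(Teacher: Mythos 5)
Your proof is correct and is essentially the same computation as the paper's: substitute the definitions of $b$ and $s$, combine the $\log f$ factors inside the square root, and read off $c_r = c_b - \sqrt{c_s c_b} \geq 1$ from \eqref{eq:const_cond}. Your explicit justification of $b-s \geq 1$ via $(1-\rho)^{-1} \geq 1$ and $\log f \geq \log n \geq \log 4 > 1$ is a small, valid addition that the paper leaves implicit.
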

Lemma \ref{lem:bcp} was established in a similar way to Lemma 5.1 in \cite{STZ16}, whose proof is included here for completeness.
\begin{proof}
In view of the definition of $b$ in \eqref{eq:policy_pm}, we have that
\begin{align*}
b-s = &~b-\rho b - \sqrt{c_s b \log f} \\
= &~c_b (1-\rho)^{-1} \log f - \sqrt{c_s c_b (1-\rho)^{-2} \log^2 f} \\
= &~(c_b - \sqrt{c_s c_b}) (1-\rho)^{-1} \log f = c_r (1-\rho)^{-1} \log f. 
\end{align*}
The fact that $c_r \geq 1$ follows from the condition $c_b - \sqrt{c_s c_b} \geq 1$ in \eqref{eq:const_cond}.
\end{proof}
\begin{lemma}\label{lem:ncp}
The length $d+s-b$ of the normal clearing phase satisfies 
\begin{equation}\label{eq:ncp}
d+s-b \geq c_o (1-\rho)^{-4/3} \log f,
\end{equation}
where $c_o = c_d - c_r \geq 1$. In particular, $d+s-b \geq 1$.
\end{lemma}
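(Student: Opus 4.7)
The plan is to unwind the definitions and reduce the claim to an elementary inequality involving the constants $c_b, c_d, c_s$ from \eqref{eq:const_cond}. Lemma \ref{lem:bcp} already expresses the length $b-s$ in closed form, namely $b-s=c_r(1-\rho)^{-1}\log f$ with $c_r=c_b-\sqrt{c_sc_b}$, so combining this with the definition of $d$ in \eqref{eq:policy_d} gives
\[
d+s-b \;=\; c_d(1-\rho)^{-4/3}\log f \;-\; c_r(1-\rho)^{-1}\log f \;=\; \bigl(c_d - c_r(1-\rho)^{1/3}\bigr)(1-\rho)^{-4/3}\log f.
\]
This identity is the starting point.

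Next I would use the assumption $\rho\geq 1/2$ from \eqref{eq:nrho_cond}, which implies $(1-\rho)^{1/3}\leq 1$, to drop the $(1-\rho)^{1/3}$ factor and obtain the lower bound $d+s-b \geq (c_d-c_r)(1-\rho)^{-4/3}\log f$. Setting $c_o:=c_d-c_r$ then yields the claimed inequality \eqref{eq:ncp}. Verifying $c_o\geq 1$ is the only remaining step, and it follows directly from the constant conditions: writing $c_o = (c_d-c_b) + \sqrt{c_sc_b}$, the first summand is nonnegative by the condition $c_d\geq c_b$ in \eqref{eq:const_cond}, and the second is at least $1$ because $c_s\geq 30$ and the condition $c_b-\sqrt{c_sc_b}\geq 1$ forces $c_b\geq c_s+2-1/c_b\geq 30$, giving $\sqrt{c_sc_b}\geq 30\geq 1$.

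For the ``in particular'' part, once $c_o\geq 1$ is in hand the bound \eqref{eq:ncp} gives $d+s-b \geq (1-\rho)^{-4/3}\log f$, which is at least $\log f \geq \log n\geq \log 4 > 1$ under the standing assumptions $n\geq 4$ and $\rho\geq 1/2$ in \eqref{eq:nrho_cond}.

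There is no real obstacle here: the argument is a direct bookkeeping computation in the spirit of Lemmas \ref{lem:lep} and \ref{lem:bcp}, and all the work has effectively been moved into the careful choice of the constants in \eqref{eq:const_cond}. The one place one must be slightly careful is in the $(1-\rho)^{1/3}\leq 1$ step, which implicitly relies on $\rho<1$ together with the quantitative lower bound on $\rho$ from \eqref{eq:nrho_cond}; this is what makes the estimate uniform over the parameter regime in which the policy is defined.
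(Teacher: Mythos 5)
Your argument is correct and follows essentially the same route as the paper's proof: write $d+s-b = d-(b-s)$, use Lemma \ref{lem:bcp} and $(1-\rho)^{-1}\log f \le (1-\rho)^{-4/3}\log f$, and then bound $c_o = c_d - c_b + \sqrt{c_s c_b}$ from below via the constant conditions in \eqref{eq:const_cond} (the paper gets $c_o \ge \sqrt{c_s c_b} \ge c_s \ge 30$, while you get $c_o \ge 1$ by a slightly more roundabout but valid estimate of $\sqrt{c_s c_b}$). No gaps; nothing further is needed.
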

\begin{proof} We have
\begin{align*}
d+s-b = &~d - (b-s) \\
= &~c_d (1-\rho)^{-4/3} \log f - c_r (1-\rho)^{-1} \log f \\
\geq &~ (c_d - c_r) (1-\rho)^{-4/3} \log f \\
= &~ c_o (1-\rho)^{-4/3} \log f.
\end{align*}
By Lemma \ref{lem:bcp} and condition $c_d \geq c_b \geq c_s \geq 30$ in view of \eqref{eq:const_cond}, we have
\[
c_o = c_d - c_r = c_d - c_b+ \sqrt{c_s c_b} \geq \sqrt{c_s c_b} \geq c_s \geq 30.
\]
\end{proof}
\section{Policy Analysis}\label{sec:analysis}
In this section, we provide a thorough analysis of the performance of our policy, and prove 
the main result of the paper, Theorem \ref{theorem:main}. Our goal is to show that any at time, 
the expected total queue size is upper bounded by $O(nd)$, which immediately leads to Theorem \ref{theorem:main}. 
To do so, we consider what happens during the $k$th arrival period and the $k$th service period, 
and prove the following: 
\begin{itemize}
\item[(i)] During the first $d$ slots of the arrival period, the expected total number of arrivals is 
$O(nd)$.
\item[(ii)] With high probability, during the lower envelope phase, the policy uses full matchings for schedules, 
never idles, and does not waste any offered services (Lemma \ref{lem:no_waste}). 
As a result, the total queue size does not increase in expectation, during this phase.
\item[(iii)] With high probability, all $k$th-batch arrivals can be cleared by the end of 
the normal clearing phase, and the number of new backlogged packets $U_k = 0$ (Lemma \ref{lem:new_backlog}). 
Then, by applying Kingman's bound (Theorem \ref{theorem:discrete_kingman}) to the number of backlogged packets, $B_k$, 
we have $\EE[B_k]\leq 1$ for all $k$ (Lemma \ref{lem:backlog_bound}). 
\end{itemize}
By properly accounting for different components that contribute to the total queue size at any given time, 
and using the facts (i) - (iii), we can establish the upper bound $O(nd)$ on the expected total queue size 
(see Section \ref{ssec:analysis}). 

\subsection{No waste or idling during the lower envelope phase}\label{ssec:no_waste}
Let $k \in \ZZ_+$, and recall from Section \ref{sec:policy} the policy description in the $k$th lower envelope phase. 
Note that in each time slot of this phase, the policy either uses a full matching or idles. 
The aim of the policy in this phase is to serve as many packets as possible. 
Since the phase consists of $b-d$ time slots, and at most $n$ packets can be served in each slot, 
at most $n(b-d)$ packets can be served in this phase. 
There are two ways in which the policy may fail to serve $n(b-d)$ packets in this phase: 
\begin{itemize}
\item[(i)] a service is offered but there is no packet available to serve, in which case 
the corresponding offered service is {\em wasted}; or
\item[(ii)] the policy {\em idles} for at least one slot.
\end{itemize}
Let $\cW^k$ be the event that during the $k$th lower envelope phase, 
the policy fails to serve $n(b-d)$ packets. The following lemma states that 
with high probability, $\cW^k$ does not happen. 
\begin{lemma}\label{lem:no_waste}
We have
\begin{equation}
\PP(\cW^k) \leq \frac{1}{2f^{13}}, \qquad \text{for all } k.
\end{equation}
\end{lemma}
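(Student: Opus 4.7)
The plan is to decompose the event $\cW^k$ subinterval-by-subinterval and apply Theorem \ref{theorem:rand_low_env} to each subinterval separately. Concretely, in subinterval $I_u$ the policy uses $\min\{\beta_u, I_u\}$ full matchings and then idles, where $\beta_u$ denotes the maximum $\beta$ such that the arrival matrix $\bA^{(u-1)}$ from subinterval $I_{u-1}$ admits a $\beta$-lower envelope. The total number of packets served in the lower envelope phase therefore equals $n\sum_{u=1}^{\ell}\min\{\beta_u, I_u\}$, and since $\sum_{u=1}^\ell I_u = b-d$, the event $\cW^k$ is exactly $\bigcup_{u=1}^\ell \cE_u$, where $\cE_u := \{\beta_u < I_u\}$ is the event that $\bA^{(u-1)}$ does not admit an $I_u$-lower envelope.

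I would next apply Theorem \ref{theorem:rand_low_env} to each $\bA^{(u-1)}$, whose $n^2$ entries are i.i.d.\ Binomial with parameters $m = I_{u-1}$ and $p = \rho/n$. By Lemma \ref{lem:ell}, $I_{u-1}\geq d/2$ for every $u\in[\ell]$, and combined with \eqref{eq:policy_d}, \eqref{eq:const_cond}, and \eqref{eq:nrho_cond}, the precondition $pmn\geq 152\log f$ is comfortably met. The theorem then gives $\beta_u \geq \lfloor \rho I_{u-1} - \sqrt{304\rho I_{u-1}\log f}\rfloor$ with probability at least $1-f^{-16}$. The crux is the deterministic verification that this lower bound dominates $I_u$. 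For $u \leq \ell-1$, $I_{u-1}-I_u = 19\sqrt{d\log f}$, so using $I_{u-1}\leq d$ to bound the square root, it suffices to show
\[
(19 - \sqrt{304})\,\sqrt{d\log f} \;\geq\; (1-\rho) I_{u-1} + 1.
\]
Since $\sqrt{d\log f}=\sqrt{c_d}(1-\rho)^{-2/3}\log f$ and $(1-\rho)I_{u-1}\leq c_d(1-\rho)^{-1/3}\log f$, this inequality reduces to a clean statement of the form $(1-\rho)^{-1/3}\geq$ constant, which is enforced by $(1-\rho)^{-2/3}\geq c_d$ in \eqref{eq:nrho_cond}. The boundary case $u = \ell$ is handled analogously via the upper bound $I_\ell \leq I_{\ell-1} - 19\sqrt{d\log f}$ implicit in \eqref{eq:ell_cond}.

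A union bound then yields $\PP(\cW^k)\leq \ell/f^{16}$. By Lemma \ref{lem:ell}, $\ell \leq (\sqrt{c_d}/38)(1-\rho)^{-2/3}\leq (\sqrt{c_d}/38)\,f^{2/3}$ since $(1-\rho)^{-1}\leq f$, so for $f$ sufficiently large $\ell\leq f^{3}/2$, giving the claimed bound $\PP(\cW^k)\leq 1/(2f^{13})$. The main obstacle is the deterministic verification $\beta_u \geq I_u$: although morally driven by the arithmetic fact $\sqrt{304}<19$, some care is required to propagate the $(1-\rho)I_{u-1}$ correction through the floor in the definition of the envelope size, and to ensure the slack afforded by \eqref{eq:const_cond}--\eqref{eq:nrho_cond} accommodates every subinterval, including the boundary case $u=\ell$ where $I_\ell$ is defined by the residual identity $I_\ell = b-\sum_{u=0}^{\ell-1} I_u$.
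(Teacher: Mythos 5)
Your proposal is correct and follows essentially the same route as the paper's proof: decompose $\cW^k$ into per-subinterval failure events, apply Theorem \ref{theorem:rand_low_env} to $\bA^{(u-1)}$ with parameters $m=I_{u-1}\geq d/2$ and $p=\rho/n$, verify deterministically that $\rho I_{u-1}-\sqrt{304\rho I_{u-1}\log f}-1\geq I_u$ using the slack $19>\sqrt{304}$ together with \eqref{eq:const_cond}--\eqref{eq:nrho_cond}, and finish with a union bound over the $\ell=O(f^2)$ subintervals. Your explicit treatment of the floor and of the boundary case $u=\ell$ via \eqref{eq:ell_cond} is slightly more careful than the paper's write-up, but the argument is the same.
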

\begin{proof}
Let $k \in \ZZ_+$. By the discussion preceding Lemma \ref{lem:no_waste}, 
we see that $\cW^k$ is also precisely the event that there is either waste 
or idling in the $k$th lower envelope phase. 
Recall that the phase is divided into consecutive subintervals $I_1, I_2, \cdots, I_{\ell}$. 
For $u \in [\ell]$, let $\cW^k_u$ denote the event that there is either waste or idling 
in subinterval $I_u$. 
Then, it is easy to see that 
\begin{equation}
\cW^k = \cW^k_1 \cup \cW^k_2 \cup \cdots \cup \cW^k_{\ell}.
\end{equation}
We now claim that for sufficiently large $n$, 
\begin{equation}\label{eq:no_waste1}
\PP\left(\cW^k_u\right) \leq \frac{1}{2f^{15}}, \quad \text{for all } u \in [\ell], 
\end{equation}
from which it follows immediately that, by union bound, 
\begin{equation}
\PP\left(\cW^k\right) \leq \sum_{u=1}^{\ell} \PP\left(\cW^k_u\right) 
\leq \frac{\ell}{2f^{15}} \leq \frac{f^2}{2f^{15}} = \frac{1}{2f^{13}}, 
\end{equation}
establishing the lemma. Here, the last inequality follows from Ineq. \eqref{eq:ell_bound} in Lemma \ref{lem:ell}
and condition $(1-\rho)^{-2/3} \geq \sqrt{c_d}/38$ in \eqref{eq:nrho_cond}, so that
\[
\ell \leq \frac{\sqrt{c_d}}{38} (1-\rho)^{-2/3} \leq (1-\rho)^{-4/3} \leq f^2.
\]
The rest of the proof is devoted to the proof of the claim.

By the conditions in \eqref{eq:nrho_cond}, it is not difficult to see that 
\begin{equation}
\rho d \geq 304 \log f.
\end{equation} 
Let $u \in [\ell]$, and recall from Section \ref{sec:policy} that $\bA^{(u-1)} = \left(A^{u-1}_{ij}\right)_{i,j=1}^n$ 
denotes the matrix of arrivals in subinterval $I_{u-1}$. 
Thus, $A_{ij}^{(u-1)}$ are i.i.d binomial random variables with parameters $I_{u-1}$ and $\rho/n$. 
By Lemma \ref{lem:ell}, $I_{u-1} \geq d/2$, so 
\begin{equation}
\rho I_{u-1} \geq \rho d/2 \geq 152 \log f.
\end{equation}
Thus, by Theorem \ref{theorem:rand_low_env}, with probability $1-1/f^{16}$, 
$\bA^{(u-1)}$ has a $\beta$-lower envelope with 
\begin{equation}
\beta \geq \rho I_{u-1} - \sqrt{304 \rho I_{u-1} \log f} -1.
\end{equation}
Then, we have
\begin{align}
\beta \geq &~\rho I_{u-1} - \sqrt{304 \rho I_{u-1} \log f}-1 \nonumber \\
= &~I_{u-1} - (1-\rho) I_{u-1} - \sqrt{304 \rho I_{u-1} \log f}-1 \nonumber \\
\geq &~I_{u-1} - (1-\rho) d - \sqrt{304 d \log f}-1 \nonumber \\
= &~I_{u-1} - c_d (1-\rho)^{-1/3} \log f - \sqrt{304 c_d (1-\rho)^{-4/3} (\log f)^2}-1 \nonumber \\
= &~I_{u-1} - c_d(1-\rho)^{-1/3} \log f - \sqrt{304 c_d} (1-\rho)^{-2/3} \log f-1 \nonumber \\
\geq &~I_{u-1} - \sqrt{c_d} (1-\rho)^{-2/3} \log f - 18 \sqrt{c_d} (1-\rho)^{-2/3} \log f \nonumber \\
= &~I_{u-1} - 19 \sqrt{c_d} (1-\rho)^{-2/3} \log f = I_{u-1} - 19 \sqrt{d\log f} \nonumber \\
= &~d - 19 u \sqrt{d\log f} = I_u.
\end{align}
Here, the second inequality follows from the fact that $I_{u-1} \leq d$ and $\rho \leq 1$, 
and the third inequality follows from the condition $(1-\rho)^{-2/3} \geq c_d$ of \eqref{eq:nrho_cond}, 
and the condition \eqref{eq:const_cond}. 

So far, we have established that for any $u \in [\ell]$, with probability $1-1/f^{16}$, 
$\bA^{(u-1)}$ has an $I_u$-lower envelope. Under this event, during subinterval $I_u$, 
we can find $I_u$ full matchings to serve packets from $\bA^{(u-1)}$ without any waste; 
this event is exactly the complement of $\cW^k_u$.
Thus, for sufficiently large $n$, for any $u \in [\ell]$, 
\begin{equation}
\PP\left(\left(\cW^k_u\right)^c\right) \geq 1 - \frac{1}{f^{16}} \geq 1 - \frac{1}{2f^{15}}, 
\end{equation}
establishing Ineq. \eqref{eq:no_waste1}, and hence the claim and the lemma.
\end{proof}

\subsection{Backlog analysis}\label{ssec:backlog}
Define $\bA^k(b) = \left(A^k_{ij}(b)\right)_{i,j=1}^n$ the matrix of $k$th-batch arrivals by
\begin{equation}\label{eq:kth_arr}
A^k_{ij}(b) = A_{ij}\left((k+1)b\right) - A_{ij}(kb),
\end{equation}
so that $A^k_{ij}(b)$ is the number of arrivals to queue $(i, j)$ in the $k$th batch. Let
\begin{equation}
R^k_i = \sum_{j'=1}^n A^k_{ij'}(b), \qquad C^k_j = \sum_{i'=1}^n A^k_{i'j}(b)
\end{equation}
be the row and column sums for the $k$th-batch arrivals. Define events
\begin{equation}\label{eq:rc_kth}
\cR^k_i = \{R^k_i > s\}, \qquad \cC^k_j = \{C^k_j > s\}, 
\end{equation}
and 
\begin{equation}
\cE^k = \left(\cR^k_1 \cup \cR^k_2 \cup \cdots \cup \cR^k_n \right) 
\cup \left(\cC^k_1 \cup \cC^k_2 \cup \cdots \cup \cC^k_n \right).
\end{equation}
\begin{lemma}\label{lem:exceed}
We have that 
\begin{equation}
\PP\left(\cE^k\right) \leq \frac{1}{2f^{13}}, \qquad \text{for all } k.
\end{equation}
\end{lemma}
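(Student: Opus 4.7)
The plan is to estimate each of the $2n$ tail probabilities $\PP(\cR_i^k)$ and $\PP(\cC_j^k)$ separately via the binomial upper tail bound in Theorem \ref{theorem:bin_tail}, and then combine them with a union bound. The main work is a careful choice of the deviation parameter so that $c_s \geq 30$ really does buy us an $f^{-15}$ per-event bound.

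First I would identify the distribution of $R_i^k$: since each $A_{ij}^k(b)$ is a sum of $b$ independent $\Bern(\rho/n)$ increments, $R_i^k = \sum_{j'=1}^n A_{ij'}^k(b)$ is $\Binom(nb, \rho/n)$ with mean $\rho b$, and the event $\cR_i^k$ corresponds to a deviation of $x := s - \rho b = \sqrt{c_s b \log f}$ above the mean. Applying \eqref{eq:bin_conc1} and dividing the numerator and denominator of the exponent by $b$ gives
\[
\PP(\cR_i^k) \leq \exp\!\left(-\frac{c_s \log f}{2\rho + 2x/(3b)}\right).
\]
The same expression bounds $\PP(\cC_j^k)$ by symmetry.

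Second, I would plug in $b = c_b (1-\rho)^{-2} \log f$ to compute $2x/(3b) = (2/3)\sqrt{c_s/c_b}\,(1-\rho)$, and then argue that $2\rho + 2x/(3b) \leq 2$ uniformly in $\rho \in [1/2, 1)$. This is the key algebraic step: the condition $c_b - \sqrt{c_s c_b} \geq 1$ in \eqref{eq:const_cond} forces $c_b > c_s$, so $\sqrt{c_s/c_b} < 1$, and hence
\[
2\rho + \tfrac{2}{3}\sqrt{c_s/c_b}\,(1-\rho) < 2\rho + 2(1-\rho) = 2.
\]
This pushes each tail probability down to $f^{-c_s/2} \leq f^{-15}$ using $c_s \geq 30$. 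A union bound over the $2n$ row and column events, combined with $n \leq f$ and $f \geq n \geq 4$, then yields $\PP(\cE^k) \leq 2n f^{-15} \leq 2 f^{-14} \leq 1/(2f^{13})$, as desired.

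The main obstacle is extracting the right constant $c_s/2$ in the exponent rather than a weaker constant like $3c_s/8$. A naive application of the Bernstein bound (e.g.\ crudely bounding $x/3 \leq \rho b$) would only give $\exp(-3c_s\log f/8)$ per event, which is too weak: with $c_s = 30$ this is $f^{-11.25}$, and no longer survives the union bound over $2n$ events. The improvement hinges on exploiting the fact that when $\rho$ is close to $1$, the perturbation $x/(3b)$ is genuinely small because of the $(1-\rho)$ prefactor, and the trade-off with $\rho$ saturates exactly at the ``Gaussian-like'' rate $c_s/2$. Once this observation is in place, the rest is a routine union bound.
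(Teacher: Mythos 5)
Your proposal is correct and follows essentially the same route as the paper: apply the Bernstein-type upper tail bound \eqref{eq:bin_conc1} to each of the $2n$ row/column sums $R_i^k, C_j^k \sim \Binom(nb,\rho/n)$ with deviation $x=\sqrt{c_s b\log f}$, obtain a per-event bound of $f^{-c_s/2}\leq f^{-15}$, and finish with a union bound using $n\leq f$ and $f\geq 4$. Your step $2\rho + 2x/(3b)\leq 2$ is just the paper's bound $\rho b + x/3 \leq \rho b + x = s \leq b$ (via Lemma \ref{lem:bcp}) written after dividing by $b$, and it rests on the same constant condition $c_b - \sqrt{c_s c_b}\geq 1$.
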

Lemma \ref{lem:exceed} is similar to Lemma 6.3 of \cite{STZ16}, 
whose proof we include for completeness.
\begin{proof} 
Consider the event $\cR^k_1 = \{R^k_1 > s\}$. 
Note that $\EE\left[R^k_1\right] = \rho b$. By Ineq. \eqref{eq:bin_conc1} of Theorem \ref{theorem:bin_tail}, 
we have 
\begin{align*}
\PP\left(R^k_1 > s\right) & = \PP\left(R^k_1 > \rho b + \sqrt{c_s b\log f}\right) \\
& = \PP\left(R^k_1 > \EE\left[R^k_1\right] + \sqrt{c_s b\log f}\right) \\
& \leq \exp\left(-\frac{c_s b \log f}{2(\rho b+ x/3)}\right),
\end{align*}
where $x = \sqrt{c_s b\log f}$. Also note that by Lemma \ref{lem:bcp},
\[
\rho b+ x/3 \leq \rho b+ x  = \rho b +  \sqrt{c_s b\log f} = s \leq b. 
\]
Thus, 
\[
\PP\left(R^k_1 > s\right) \leq \exp\left(-\frac{c_s b \log f}{2b}\right) = \frac{1}{f^{c_s/2}} \leq \frac{1}{4f^{14}},
\]
where in the last inequality, we used the condition $n \geq 4$ of \eqref{eq:nrho_cond} 
and $c_s \geq 30$ of \eqref{eq:const_cond}. 
We have similar inequalities for other events $\cR^k_i$ and $\cC^k_j$, 
and by a simple union bound, we have
\[
\PP\left(\cE^k \right) \leq \frac{2n}{4 f^{14}} \leq \frac{1}{2 f^{13}}.
\]
\end{proof}

\begin{lemma}\label{lem:new_backlog}
Let $k \in \ZZ_+$ be fixed. Then, the following hold.
\begin{itemize}
\item[(a)] $U_k = 0$ on any sample path where neither $\cW^k$ nor $\cE^k$ occurs. 
\item[(b)] $\PP(U_k > 0)\leq 1/f^{13}$.
\item[(c)] On any sample path, we have $U_k \leq n^2 b$.
\end{itemize}
\end{lemma}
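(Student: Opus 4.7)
}

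The plan is to carefully account for the packets served during the $k$th service period on the ``good'' event $(\cW^k)^c \cap (\cE^k)^c$, and then to conclude parts (b) and (c) easily via union bound and a crude bound on arrivals, respectively. For part (a), the key observation is that if $\cW^k$ does not occur, then in every single slot of the $k$th lower envelope phase a full matching is used with no wasted service. Since the lower envelope phase has length $b-d$, this means that for each row $i$ and each column $j$ exactly $b-d$ packets are served during this phase, and moreover these are all $k$th-batch packets (by construction, the lower envelope phase serves only arrivals from the subintervals $I_0, I_1, \dots, I_{\ell-1}$, which all lie within the $k$th batch).

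Given this, at the end of the lower envelope phase the unserved $k$th-batch packets form a queue matrix $\widehat{\bQ}^k$ whose row sums and column sums are exactly $R^k_i - (b-d)$ and $C^k_j - (b-d)$ respectively. On the event $(\cE^k)^c$ we have $R^k_i \leq s$ and $C^k_j \leq s$ for all $i,j$, so
\begin{equation*}
\max\left\{\max_i \left(R^k_i - (b-d)\right),\ \max_j \left(C^k_j - (b-d)\right)\right\} \;\leq\; s - (b-d) \;=\; d+s-b.
\end{equation*}
Since the length of the normal clearing phase is exactly $d+s-b$, and since during this phase the policy uses the optimal clearing policy of Theorem \ref{theorem:up_env} only on the $k$th-batch remainder $\widehat{\bQ}^k$, all of $\widehat{\bQ}^k$ is cleared by the end of the normal clearing phase. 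Hence $U_k = 0$, proving (a).

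Part (b) then follows immediately: $\{U_k > 0\} \subseteq \cW^k \cup \cE^k$, so a union bound together with Lemma \ref{lem:no_waste} and Lemma \ref{lem:exceed} gives $\PP(U_k > 0) \leq \frac{1}{2f^{13}} + \frac{1}{2f^{13}} = \frac{1}{f^{13}}$. For part (c), the deterministic bound $U_k \leq n^2 b$ is immediate, since $U_k$ is at most the total number of $k$th-batch arrivals, and each of the $n^2$ queues receives at most one arrival per slot over $b$ slots.

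The only potentially subtle step is the bookkeeping in part (a), specifically verifying that on $(\cW^k)^c$ the lower envelope phase (i) never idles, (ii) never wastes any offered service, and (iii) serves only $k$th-batch packets, so that the served count per row/column is precisely $b-d$. All three facts are built into the policy specification in Section \ref{sec:policy} and the definition of $\cW^k$ in Section \ref{ssec:no_waste}, so no additional work beyond careful reading is required; the rest of the proof is short and mechanical.
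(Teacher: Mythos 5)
Your proposal is correct and follows essentially the same argument as the paper's proof: on $(\cW^k)^c$ each row and column sum of the $k$th-batch remainder drops by exactly $b-d$, on $(\cE^k)^c$ these sums were at most $s$, so the remainder has maximum row/column sum at most $d+s-b$ and is cleared within the normal clearing phase by Theorem \ref{theorem:up_env}, with (b) by union bound and (c) by the trivial arrival bound. No gaps to report.
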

Note that Lemma \ref{lem:new_backlog} is similar to Lemma 6.4 of \cite{STZ16}. 
We provide the proof of the lemma here for completeness. 
\begin{proof}
(a) Recall the matrix $\bA^k(b)$ of $k$th-batch arrivals defined in Eq. \eqref{eq:kth_arr}. 
Let $\hat{\bQ}^k = \left(\hat{Q}^k_{ij}\right)_{i,j=1}^n$ be the matrix 
of $k$th-batch arrivals that remain by the end of the lower envelope phase, 
and let 
\begin{equation}\label{eq:rc_kth_rem}
\hat{R}_i = \sum_{j'=1}^n \hat{Q}^k_{ij'}, \qquad \hat{C}_j = \sum_{i'=1}^n \hat{Q}^k_{i'j}
\end{equation}
be the row and column sums of $\hat{\bQ}^k$.

Suppose that $\cW^k$ does not occur. 
Then, during the lower envelope phase, the policy served $n(b-d)$ packets from $\bA^k(b)$, 
and each row/column sum decreases by exactly $b-d$ by the end of the phase. 
More precisely, 
\begin{equation}
\hat{R}_i = R_i - (b-d), \text{ and } \hat{C}_j = C_j - (b-d), \text{ for all } i, j \in [n], 
\end{equation}
where we recall that $R_i$ and $C_j$ are the row and column sums of $\bA^k(b)$ (cf. Eq. \eqref{eq:rc_kth}). 

Suppose that $\cE^k$ does not occur. Then, $R_i \leq s$ for all $i$ and $C_j \leq s$ for all $j$. 
Thus, if neither $\cW^k$ nor $\cE^k$ occurs, then
\begin{equation}
\hat{R}_i \leq s-(b-d) = d+s-b, \text{ and } \hat{C}_j \leq d+s-b, \text{ for all } i, j.
\end{equation}
But $d+s-b$ is the length of the normal clearing phase, so all arrivals 
from the $k$th batch will be cleared by the end of the normal clearing phase, 
using the optimal clearing policy described after Theorem \ref{theorem:up_env}, 
and then $U_k = 0$. 

(b) By part (a), we have
\[
\PP(U_k > 0) \leq \PP\left(\cW^k \cup \cE^k \right) \leq \frac{1}{f^{13}}.
\]

(c) The number of packets that can get backlogged from any batch cannot be more than 
the total number of arrivals from that batch, which is upper bounded by $n^2b$. 
\end{proof}
\begin{lemma}\label{lem:backlog_bound}
We have that $\EE[B_k] \leq 1$ for all $k$.
\end{lemma}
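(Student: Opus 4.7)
The plan is to apply the discrete-time Kingman bound (Theorem~\ref{theorem:discrete_kingman}) to the backlog dynamics \eqref{eq:backlog_dyn}. I would view $\{B_k\}$ as the queue length of a discrete-time $G/G/1$ system with per-period i.i.d.\ arrivals $X(k) := U_k$ and constant per-period services $Y(k) := b-s$. Since each $U_k$ is a measurable function of the Bernoulli arrivals and scheduling decisions restricted to the $k$th batch (of length $b$), and distinct batches are driven by independent arrival streams, the $U_k$'s are i.i.d.\ across $k$ and independent of the deterministic service sequence; the initial condition $B_0 = 0$ matches the theorem's hypothesis.

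Using Lemma~\ref{lem:new_backlog}(b)--(c), $U_k \le n^2 b$ deterministically and $\PP(U_k > 0) \le 1/f^{13}$, which gives
\[
\lambda := \EE[U_k] \le \frac{n^2 b}{f^{13}}, \qquad m_{2x} := \EE[U_k^2] \le \frac{n^4 b^2}{f^{13}},
\]
while $\mu := b-s$ and $m_{2y} := (b-s)^2$. Lemma~\ref{lem:bcp} provides $b-s \ge 1$, and since $f \ge n \vee (1-\rho)^{-1}$ the quantity $n^2 b/f^{13}$ is far smaller than $1$, so in particular $\lambda < \mu$ and the stability hypothesis of Theorem~\ref{theorem:discrete_kingman} holds. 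Plugging into Kingman's bound,
\[
\EE[B_k] \le \frac{m_{2x} + m_{2y} - 2\lambda\mu}{2(\mu-\lambda)},
\]
the claim $\EE[B_k] \le 1$ should reduce to an explicit simplification that uses the parameter conditions \eqref{eq:const_cond}--\eqref{eq:nrho_cond}.

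The main obstacle I anticipate is the final simplification. A convenient rewriting of Kingman's bound is $\frac{\mathrm{Var}(U_k)}{2(\mu-\lambda)} + \frac{\mu-\lambda}{2}$: the variance summand is immediately controlled by $\mathrm{Var}(U_k) \le n^4 b^2/f^{13}$, which is overwhelmingly smaller than $\mu-\lambda$; the light-traffic summand $(\mu-\lambda)/2 \approx (b-s)/2 = \Theta((1-\rho)^{-1}\log f)$, however, is not a priori $\le 1$. Pushing this down to the claimed constant bound will likely require either a sharpened Lindley-recursion argument that uses work conservation $\EE[\mathrm{idle}] = \mu - \lambda$ to cancel the light-traffic inflation, or a direct renewal-cycle analysis that exploits $\PP(U_k > 0) \le f^{-13}$ to bound the frequency and magnitude of excursions of $B_k$ above zero. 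In either case, the decisive input is the extreme lightness of $U_k$ recorded in Lemma~\ref{lem:new_backlog}, which ensures $U_k = 0$ with overwhelming probability and should drive the final constant estimate.
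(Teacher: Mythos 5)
Your setup is on the right track—applying the discrete-time Kingman bound to the recursion \eqref{eq:backlog_dyn} with i.i.d.\ ``arrivals'' $U_k$ and using Lemma \ref{lem:new_backlog} to control $\EE[U_k]$ and $\EE[U_k^2]$ is exactly the paper's strategy—but your proof as written does not close. You correctly observe that with $\mu = b-s$ and deterministic service, Kingman's bound reduces to $\frac{\mathrm{Var}(U_k)}{2(\mu-\lambda)} + \frac{\mu-\lambda}{2}$, and that the second term is $\Theta\left((1-\rho)^{-1}\log f\right)$, far larger than $1$. At that point you only gesture at possible fixes (a sharpened Lindley argument, a renewal-cycle analysis) without carrying either out, so the claimed bound $\EE[B_k]\le 1$ is not established by your argument.

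The missing idea in the paper is a simple monotone domination that sidesteps this entirely: since $b-s\ge 1$ by Lemma \ref{lem:bcp}, one has $B_{k+1} = \left(B_k + U_k - (b-s)\right)^+ \le \left(B_k + U_k - 1\right)^+$, so $B_k$ is dominated sample-path-wise by the process $\hat{B}_k$ with $\hat{B}_0=0$ and $\hat{B}_{k+1} = (\hat{B}_k + U_k - 1)^+$, i.e.\ a $G/G/1$ queue with \emph{unit} per-period service. Applying Theorem \ref{theorem:discrete_kingman} to $\hat{B}_k$ with $\mu = 1$, $m_{2y}=1$, and the estimates $\lambda = \EE[U_k] \le f^{-7}$, $m_{2x} = \EE[U_k^2] \le f^{-1}$ (which follow from $U_k \le n^2 b$, $\PP(U_k>0)\le f^{-13}$, and $n^2 b \le f^6$, $n^4 b^2 \le f^{12}$ under the parameter conditions) yields $\EE[B_k] \le \EE[\hat{B}_k] \le \frac{f^{-1}+1}{2(1-f^{-7})} \le 1$. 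Deliberately throwing away most of the service capacity is what kills the light-traffic term $(\mu-\lambda)/2$ that blocked your version; without this step (or a fully executed alternative such as the renewal-cycle argument you mention), the proof is incomplete.
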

Lemma \ref{lem:backlog_bound} is similar to Lemma 6.5 of \cite{STZ16}, 
and we only provide a sketch of its proof here. 
Recall the dynamics of the backlogged packets in Eq. \eqref{eq:backlog_dyn}, so that 
\[
B_{k+1} = \left(B_k+U_k - (b-s)\right)^+ \leq \left(B_k+U_k - 1\right)^+,
\]
where the inequality follows from Lemma \ref{lem:bcp}. 
Then, we can upper bound each $B_k$ sample-path-wise by $\hat{B}_k$, 
where $\hat{B}_0 = 0$, and
\[
\hat{B}_{k+1} = \left(\hat{B}_k+U_k - 1\right)^+.
\]
Using the notation of Theorem \ref{theorem:discrete_kingman} for $\hat{B}_k$, 
we have $\mu = 1, m_{2y} = 1$. We can also show that
\begin{align}
\lambda = \EE\left[U_k\right] \leq f^{-7}, \text{ and } m_{2x} = \EE\left[U_k^2\right] \leq f^{-1}.
\label{eq:U_mean}
\end{align}
Finally, a direct application of Theorem \ref{theorem:discrete_kingman} gives 
\[
\EE\left[B_k\right] \leq \EE\left[\hat{B}_k\right] \leq \frac{f^{-1}+1}{2(1-f^{-7})} \leq 1.
\]

\subsection{Queue size analysis}\label{ssec:analysis}
In this subsection, we prove Theorem \ref{theorem:main}, the main result of this paper. 
The analysis is similar to the one in \cite{STZ16}. We fix a time slot $\tau$ and 
consider two cases, depending on whether $\tau$ is in a lower envelope phase 
or not.

To facilitate the analysis, we introduce the following notation; 
the same notation was also used in \cite{STZ16}. 
Let $t \in [b+1]$, which will be used to index the $b$ slots of 
the $k$th arrival period and the first slot of the $k$th normal clearing phase. 
For $t \in [b]$, let $A^k_{ij}(t)$ denote the number of arrivals to queue $(i, j)$ 
during the first $t$ slots of the $k$th arrival period, so 
\begin{equation}
A^k_{ij}(t) = A_{ij}(kb+t) - A_{ij}(kb).
\end{equation}
Note that this notation is consistent with that introduced earlier, 
in Eq. \eqref{eq:kth_arr} of Section \ref{ssec:backlog}.
Next, for $t \in [b]$, let $S^k_{ij}(t)$ denote the number of $k$th-batch packets 
that arrive in queue $(i, j)$ and get served during the first $t$ slots of the $k$th batch. 
Finally, for $t \in [b]$ ($t = b+1$, respectively), let $Q^k_{ij}(t)$ be the number of $k$th-batch packets 
in queue $(i, j)$ at the beginning of the $t$-th slot of the $k$th arrival period 
(the first slot of the $k$th normal clearing phase, respectively). Then, we have
\begin{equation}\label{eq:q_batch}
Q^k_{ij}(t+1) = A^k_{ij}(t) - S^k_{ij}(t).
\end{equation}

\paragraph{Queue size during the lower envelope phase.} 
Consider time slot $\tau$ that belongs to the $k$th lower envelope phase, 
so that $kb+d+1 \leq \tau \leq (k+1)b$, 
and consider the total queue size $\sum_{i,j=1}^n Q_{ij}(\tau+1)$. 
This queue size includes two components: backlogged packets 
from previous service periods, and $k$th-batch packets. 

By Lemma \ref{lem:backlog_bound}, the number $B_k$ of backlogged packets 
satisfies 
\begin{equation}\label{eq:backlog_bound}
\EE[B_k]\leq 1.
\end{equation} 
Next, consider the $k$th-batch packets that contribute 
to the queue size $\sum_{i,j=1}^n Q_{ij}(\tau+1)$. 
Let $t = \tau - kb$, and 
recall the definition of $Q^k_{ij}(t+1)$ from Eq. \eqref{eq:q_batch}. 
Then, 
\begin{equation}\label{eq:q_ident}
\sum_{i,j=1}^n Q_{ij}(\tau+1) = B_k + \sum_{i,j=1}^n Q^k_{ij}(t+1).
\end{equation}
Also recall the definitions of $A^k_{ij}(t)$ and $S^k_{ij}(t)$. First, we have
\begin{equation}
\EE\left[\sum_{i,j=1}^n A^k_{ij}(t)\right] = \frac{\rho}{n}\cdot n^2 t = \rho n t.
\end{equation}
Second, recall the definition of event $\cW^k$ in Section \ref{ssec:no_waste}. 
If event $\cW^k$ does not occur, 
\begin{equation}
\sum_{i,j=1}^n S^k_{ij}(t) = n(t-d).
\end{equation}
Thus, by Lemma \ref{lem:no_waste}, 
\begin{align*}
\EE\left[\sum_{i,j=1}^n S^k_{ij}(t)\right] & \geq n(t-d) (1-\PP(\cW^k)) \\
& \geq \left(1-\frac{1}{f^{13}}\right) n(t-d) 
\geq \rho n(t-d).
\end{align*}
Thus, 
\begin{align}
\EE\left[\sum_{i,j=1}^n Q^k_{ij}(t+1) \right] = &~\EE\left[\sum_{i,j=1}^n A^k_{ij}(t)\right] - \EE\left[\sum_{i,j=1}^n S^k_{ij}(t)\right] \nonumber \\
\leq &~ \rho n t - \rho n(t-d) \nonumber \\
= &~ \rho n d \leq nd, \quad t = d+1, \cdots, b. \label{eq:q_batch_bd1}
\end{align}
By Eq. \eqref{eq:q_ident}, Ineq. \eqref{eq:backlog_bound} and \eqref{eq:q_batch_bd1}, 
\begin{equation}
\EE\left[\sum_{i,j=1}^n Q_{ij}(\tau+1)\right] = \EE\left[B_k + \sum_{i,j=1}^n Q^k_{ij}(t+1)\right] 
\leq 1 + nd \leq 2nd.
\end{equation}

\paragraph{Queue size outside the lower envelope phase.} 
This part of the analysis is the same as that in Section 6.4 of \cite{STZ16}, 
and is provided here for completeness. 
Let time slot $\tau$ belong to either the $k$th normal clearing phase, 
or the $k$th backlog clearing phase, so that $(k+1)b+1 \leq \tau \leq (k+1)b+d$, 
and consider the total queue size $\sum_{i,j=1}^n Q_{ij}(\tau+1)$. 
This queue size includes three components: backlogged packets, 
$k$th batch packets, and new arrivals from the $(k+1)$st batch. 

The backlogged packets may be from previous service periods, whose number is given by $B_k$, 
or from the $k$ th batch, whose number is upper bounded by $U_k$. 
Thus, in view of \eqref{eq:U_mean} and \eqref{eq:backlog_bound},
the expected total number of backlogged packets is upper bounded by 
$\EE[B_k + U_k]\leq 2$. 

Next, consider the $k$th batch packets that contribute to $\sum_{i,j=1}^n Q_{ij}(\tau+1)$. 
The number of these packets is largest at the beginning of slot $(k+1)b+1$, 
since there are no $k$th batch arrivals from time slot $(k+1)b+1$ onwards. 
Thus, the expected value of the number of these packets is upper bounded by
\[
\EE\left[\sum_{i,j=1}^n Q^k_{ij}(b+1)\right] \leq nd,
\]
where the inequality follows from Ineq. \eqref{eq:q_batch_bd1} for $t = b$.

Finally, the number of arrivals from the $(k+1)$st batch is largest at time $(k+1)b+d$, 
the expectation of which is upper bounded by
\[
\EE\left[\sum_{i,j=1}^n A^{k+1}_{ij}(d)\right] = \frac{\rho}{n}\cdot n^2 d = \rho n d \leq nd. 
\]
Therefore, the expected total queue size $\EE\left[\sum_{i,j=1}^n Q_{ij}(\tau+1)\right]$ is upper bounded by 
$2+nd + nd \leq 3nd$.

\section{Discussion}\label{sec:discussion}
In this paper, we proposed a new scheduling policy for input-queued switches when all arrival rates are equal.
In the regime $1-\rho = \Theta(n^{-1})$,
our policy gives an upper bound of $O\left(n^{7/3}\log n\right)$ on the expected total queue size, 
improving the state-of-the-art bound $O\left(n^{2.5}\log n\right)$.
Similar to the policy in \cite{STZ16}, ours is of the batching type, 
and starts serving packets much earlier than the time when a whole batch has arrived. 
A central component of our policy is the ability to efficiently serve 
packets without waste, from subintervals that are shorter than a whole batch, 
which makes crucial use of a novel, tight characterization 
of the largest $k$-factor in a random bipartite (multi-)graph, which may be of independent interest. 

In the regime $1-\rho = \Theta(n^{-1})$, 
while our upper bound on the expected total queue size is $O\left(n^{7/3}\log n\right)$, 
the conjectured scaling is $\Theta\left(n^2 \right)$. It remains an important and interesting challenge to see 
whether the gap can be closed. 
Among the class of the batching-type policies considered in our paper and in \cite{STZ16}, 
it appears difficult, if not altogether impossible, to reduce the gap for the following reason. 
On the one hand,  a batching-type policy like ours
needs to wait for enough packets to arrive for a certain level of statistical regularity to emerge. 
On the other hand, the number of arrivals that the policy waits for directly determines 
the queue-size scaling that can be achieved. 
The level of regularity required in our policy seems to be the least stringent, 
given that our characterization of the largest $k$-factor in a random bipartite multigraph is tight. 
Thus, we believe that 
fundamentally new methodological tools 
and design of much more elaborate policies are required 
 to make further progress 
on the queue-size scaling, at least in the regime $1-\rho = \Theta(n^{-1})$.


\end{document}